\documentclass[a4paper,11pt]{article}
\usepackage[utf8]{inputenc}
\usepackage{amsthm}
\usepackage{amsmath}
\usepackage{amssymb}
\usepackage{caption}
\usepackage{cite}
\usepackage{tikz}
\usepackage{subcaption}
\usepackage[labelformat=simple]{subcaption}
\usetikzlibrary{calc}
\usepackage{color,url}

\usepackage{booktabs}
\usepackage{subcaption}
\usepackage{pgfplots}
\pgfplotsset{compat=1.18}

\usepackage{changes}

\usepackage{hyperref}

\hypersetup{colorlinks=true, linkcolor=blue, citecolor=blue, urlcolor=blue}

\usepackage{enumitem}
\setlist[enumerate]{noitemsep,topsep=3pt}
\setlist[itemize]{noitemsep,topsep=3pt}

\theoremstyle{definition}
\newtheorem{defin}{Definition}

\theoremstyle{plain}

\newcommand{\df}{\mathrm{df}}

\newcommand{\cO}{{\cal O}}
\newcommand{\cT}{{\cal T}}

 \newtheorem{remark}{Remark}
 \newtheorem{statement}{Statement}

 \newtheorem{lemma}[defin]{Lemma}
 \newtheorem{theorem}[defin]{Theorem}

 \theoremstyle{definition}

\begin{document}

\title{Vertices that belong to every minimum dominating set of a graph and their connection with transportation sharing systems with study cases in Campo de Gibraltar area}

\author{Mohammad Farhan$^{a,}$\thanks{\texttt{mohammad.farhan@uca.es}}
\and Dorota Kuziak$^{a,}$\thanks{\texttt{dorota.kuziak@uca.es}}
\and
Iztok Peterin$^{b,}$\thanks{\texttt{iztok.peterin@um.si}}
\and Ismael G. Yero$^{c,}$\thanks{\texttt{ismael.gonzalez@uca.es}}
}

\maketitle

\begin{center}
$^a$ Departamento de Estad\'istica e Investigaci\'on Operativa, Universidad de C\'adiz, Algeciras Campus, Spain \\

\medskip
$^b$ Faculty of Electrical Engineering and Computer Science, University of Maribor, Slovenia\\

\medskip
$^c$ Departamento de Matem\'{a}ticas, Universidad de C\'adiz, Algeciras Campus, Spain
\end{center}

\maketitle

\begin{abstract}
This study addresses a theoretical model regarding equity and accessibility challenges in designing shared transportation systems (such as micro-mobility networks) by applying graph-vertex domination setting. The work focuses on identifying dominating forced vertices, that represent nodes belonging to every dominating set of a graph of the smallest possible cardinality, and which correspond to critical, non-negotiable station locations essential for maintaining system efficiency and coverage.

From a theoretical perspective, in the paper it is first demonstrated that determining whether a given vertex is a dominating forced vertex is co-NP-hard, establishing the computational infeasibility of exact identification in large networks. To analyze graph structures, sharp theoretical bounds on the maximum number of dominating forced vertices are established, proving that their count is bounded above by one-third of the order of the graph, and provide complete structural characterizations for graphs achieving this bound, as well as, trees with no dominating forced vertices.

To overcome computational limits in practical urban settings, the study uses an iterated greedy metaheuristic framework to generate minimal dominating sets and approximate critical forced nodes based on their appearance frequency across iterations. The methodology is validated on strong grid graphs and applied to real-world road network models of Algeciras and La L\'inea de la Concepci\'on, two cities in the area of Campo de Gibraltar, Spain, which successfully pinpoints candidate location points for scooter-sharing stations across both cities.
\end{abstract}

\noindent
{\bf Keywords:}  dominating forced vertices, transportation sharing systems, dominating sets, domination number, Campo de Gibraltar cities  \\

\noindent
{\bf AMS Subj.\ Class.\ (2020)}: 05C12

\section{Introduction}

The global expansion of bicycle (scooters or related transport methods) sharing systems along several cities in the world has revolutionized the urban mobility of people. The usefulness of these systems is questionless, mainly due to its high influence into the development of less polluted and less crowded cities, which is undoubtedly positive for the improvement of our society.  The success of these systems does not depend solely on the fleet size used (number of bikes, scooters, etc.), but also on the strategic placement of stations along a determined city or region. Urban planners face several fundamental challenges while designing such a sharing systems for a determined region, like for instance the following ones.

\begin{itemize}
    \item Logistical costs, where operators must use trucks to manually move bikes from full stations to empty ones, which is indeed one of the most expensive parts of operations.
    \item Demand prediction, which involves the sudden weather changes or events that can disrupt planned routes, requiring advanced predictive algorithms to maintain service levels.
    \item Equity and accessibility, that addresses the problem about how to guarantee that every citizen has a station within a walking distance while minimizing infrastructure costs.
\end{itemize}

Each one of these challenges has their own approaches for solutions and there is a wide range of investigations covering these and many other issues for generating or improving different sharing systems. In our investigation, we aim to focus in the equity or accessibility issue. Notably, one possible approach to it lies in the area of graph theory, and specifically in the area of domination models, which is in turn, a classical topic in the theoretical sense.

In graph theory, a city is usually modeled as a set of nodes (intersections or points of interest) connected by edges (streets). A dominating set is a subset of nodes where stations are placed in such a way that every node in the city is either a station or directly adjacent to one, which indeed relates to the equity or accessibility of the sharing system. The objective for system planners aims to identify the smallest possible dominating set that would serve as location points for the bicycles (or similar transportation units). Such a model (with the smallest possible number of locations) clearly maximizes the coverage of the users in the city, with a lowest possible investment. Notice that, we yet do not consider the important detail regarding the number of transportation units that are necessary at each location regarding the number of possible users of the system, nor we analyze the problem of relocating such units, if this would be required. Namely, we mainly focus on establishing the ``initial state'' of placing the transportation devices so that the system starts functioning.

This basic idea of dominating points is clearly very useful, although it lacks of several improvements, study of particular situations of a city, possible traffic situations (streets where bikes cannot be placed, etc.). However, there are a few details that might be taken into account. For example, the problem of finding the smallest possible dominating set in a graph is a very well known hard problem (called NP-hard, due to the complexity class where it belongs), see \cite{book6}. This makes that a system planner might be in trouble while finding a solution for identifying locations in a city for placing the transportation units, if the city is ``big enough''. Consequently, several times, only some approximated results can be obtained, but at least, such approximated results frequently works very efficiently, and a best possible solution is not strictly needed.

On the other hand, in domination theory, a vertex is considered dominating critical if it belongs to every possible minimum dominating set of a graph. Such vertex can be also called ``forced'' according to another terminologies. Now, for a sharing system planner, these points are non-negotiable. If a vertex is dominating critical, it means that without a station at that exact location, the system’s efficiency (equity and accessibility) collapses or the cost of covering adjacent areas increases too much. These points ``intuitively'' coincide with vertices adjacent to other vertices that have no more neighbors (called leaves), or to vertices with a ``high'' degree (number of neighbors). The identification of such vertices is then crucial for designing any efficient sharing system.

In our investigation we precisely focus into the identification of such dominating critical vertices in a given graph. Unfortunately, for the sharing system planners, we first precisely show that finding such vertices in a graph is not an easy task. That is, showing that a given vertex of a graph belongs to every minimum dominating set of such graph is a problem included in the NP-hard complexity class. Further on, we develop a greedy algorithm that finds these critical (or forced) vertices in a graph, and make a particular study for possible transportation sharing system in the two main cities from the area of ``Campo de Gibraltar'', Spain, which are Algeciras and La L\'inea de la Concepci\'on. We first develop some computational study to detect these dominating critical or forced vertices in such cities, as well as, develop some computational implementations on random graphs in order to evaluate the efficiency of the algorithm and gives some conclusions regarding empirical bounds for the number of these dominating critical or forced vertices in a given graph.

\subsection{Formal concepts}

The topic of domination in graphs is a classical research area in graph theory, and any interested person can easily find a large amount of information about it. The number of ongoing investigations cover a very wide range of different directions, arising in applied situations, as well as, inside the theory itself. To see the vast information on these facts, the reader can consult some of the books \cite{book1,book2,book3,book4,book5,book6,book7,book8}. The classical concepts on domination are as follows. Given a graph $G$, a set of vertices $D\subseteq V(G)$ is a \textit{dominating set} of $G$, if every vertex not in $D$ has a neighbor in $D$. The \textit{domination number} of $G$ is the minimum cardinality of a dominating set of $G$, and is usually denoted by $\gamma(G)$. A dominating set of $G$ of cardinality $\gamma(G)$ is called a \textit{minimum dominating set} (of $G$), or a $\gamma(G)$-\textit{set}. In addition, a dominating set is called \textit{minimal} if does not properly contain a dominating set of smallest cardinality.

In closed relationship with these concepts there are lots of other structures and parameters in the literature, each of them interesting by itself. Many of these domination parameters are nowadays very well studied, and clearly, the classical ones of domination number and dominating sets are probably the most studied. However, there are still numerous open questions about it that are of interest for the research. In our investigation, we focus on considering those vertices that must belong to every minimum dominating set and their usefulness for designing bicycle sharing systems. This theoretical notion was first considered in \cite{Mynhardt}, and somehow further rediscovered in \cite{Bouquet}. Some other contributions about this class of vertices appeared in \cite{Klostermeyer}, and also, very recently in \cite{Ziemann}. Apart from these works, not that large specifically knowledge about such vertices is known, although the reader can find several parallel results regarding those vertices that belong to all or to no minimum dominating set with an added property. That is, one can find such similar studies for variations of the classical domination concept including total domination, double domination, paired domination, and others. Among them, we remark \cite{Blidia,Blidia2,Chen,Chen2,Cockayne,Gunther,Henning,Henning-Plum,Meddah,Meddah2}. In addition, the reader can see, for instance, some studies considering graphs with unique minimum dominating sets (or related concept) that clearly induce the existence of dominating forced vertices. Specifically, \cite{Gunther} initiated such topic, as well as, some other further contributions like \cite{Fischermann,Fraboni,Hedetniemi,Zhao} are closely related.

A vertex $v\in V(G)$ is a \textit{dominating forced vertex} of $G$ if it belongs to every minimum dominating set (or $\gamma(G)$-set) of $G$. The number of dominating forced vertices of a graph $G$ is denoted by $\df(G)$. We must remark that the set of dominating forced vertices of a graph $G$ were already represented as ${\rm core}(G)$ in \cite{Bouquet}. In this sense, clearly $\df(G)=|{\rm core}(G)|$. Among the main results that can be remarked regarding this topic, in \cite{Mynhardt}, it was proved that for any tree $T$ and any vertex $v\in V(T)$, it holds that $v$ is contained in every minimum dominating set $T$ if and only if the number of children $w$ of $v$ whose subtree $T_w$  contains a path $P$ with length $\ell\equiv 0 \pmod{3}$ is greater than or equal to $2$. In \cite{Klostermeyer}, the authors proved that the non existence of dominating forced vertices in a tree $T$ is a necessary condition for $T$, to satisfy that the eternal eviction number equals the classical domination number. On the other hand, the work \cite{Bouquet} considered some computational issues concerning finding these vertices in interval graphs. Specifically, there was established that the problem of verifying whether a vertex is dominating forced can be solved in time $\mathcal{O}(|V(G)| + |E(G)|)$ for any interval graph $G$, which results in a quadratic time algorithm for such a problem in this graph class. Finally, in \cite{Ziemann}, a linear time algorithm for determining the sets of vertices that belong to all (i.e., dominating forced vertices), to some, or to no minimum dominating sets of a tree was provided.

It is readily observed that it might happen that $\df(G)=0$ for some graphs $G$, i.e., $G$ has no dominating forced vertices. Trivial examples of this situation are for instance the complete graph $K_n$ or the cycle $C_n$. On the other hand, it must clearly happen that $\df(G)$ is at most the domination number of $G$. Clearly, this situation only happen in the case that $G$ will have a unique minimum dominating set. As a consequence of these comments, for any graph $G$ it happens that
\begin{equation}
\label{eq:triv-bounds}
0\le \df(G)\le \gamma(G).
\end{equation}

The idea of considering ``forced vertices'' in connection with other structures in graph theory is a relatively common one. For example, there also exists this notion for maximal independent sets (see \cite{Boros}); and for the metric bases (see \cite{Hakanen,Hakanen1}).

In the article \cite{Bouquet}, authors considered not only the vertices that belong to every minimum dominating set, but also, the vertices that do not belong to any of such sets, and the ones that belong to some of them. Their results were mainly focused into giving conditions for a vertex to belong to one of such classifications for general graphs, as well as, for some particular classes. In our exposition, we go further away than this, and try to use them to solve some practical situations in urban mobility.

\subsection{Other terminology and notation}

Let $G$ be a graph, let $S\subset V(G)$, and let $u\in S$. We use $G[S]$ for a subgraph of $G$ induced by vertices from $S$. The set of neighbors of $u$ is called the \textit{open neighborhood} of $u\in V(G)$ and is denoted by $N_G(u)$. By $N_G[v]$ we denote the \textit{closed neighborhood} of $v$ which is $N_G(v)\cup \{v\}$ and further we use $N_G[S]$ for the set of all vertices in $S$ together with vertices adjacent to some vertices from $S$. The vertex $v\notin S$ is a \textit{private neighbor} of $u$ with respect to $S$ if $N_G(v)\cap S= \{u\}$. The set of private neighbors of $u$ with respect to $S$ is denoted by $pn(v,S)$.

As usual is ${\rm deg}(v)=|N_G[v]|$ the \textit{degree} of $v\in V(G)$. A vertex $v$ with ${\rm deg}(v)=0$ is an \textit{isolated vertex} of $G$. A vertex $v$ with ${\rm deg}(v)=1$ is a \textit{leaf} of $G$ and the vertex adjacent to a leaf $x$ is called the \textit{support} of $x$. Further, we distinguish \textit{strong support vertices}, that are adjacent to more than one leaf, and \textit{weak support vertices} with only one leaf neighbor. Notice that, any strong support vertex is a dominating forced vertex.

Given a graph $G$ with vertex set $V(G)=\{u_1,\dots,u_n\}$ and any graph $H$, the \textit{corona graph} $G\odot H$ is the graph obtained from one copy of $G$ and $n$ copies of $H$, say $H_1$, $\dots$, $H_{n}$, by adding edges between the vertex $u_i\in V(G)$ and every vertex $v\in V(H_i)$.

If a graph $G$ has an isolated vertex, then clearly such vertex is a dominating forced vertex. In this sense, throughout our whole exposition we consider graphs having no isolated vertices.

\section{Some related studies}

The concepts of dominating sets and domination number are nowadays some of the most studied topics in graph theory. To confirm this, we just suggest the reader to check the several already cited books \cite{book1,book2,book3,book4,book5,book6,book7,book8}. Our main goal in this investigation is not to study theoretical aspects of dominating sets. Therefore, we center our attention in this section into related studies that consider ``dominating properties'' of graphs applied to transportation sharing systems or to the identification of location points in a network that would keep some domination features. Further, we only consider studies that are related to our investigation and do not try to cover all the literature from the huge area of transportation sharing systems.

In \cite{related1}, the authors presented a study in which they consider efficient resource allocation in urban environments by means of some graph-theoretic approaches. To this end, optimizing resource allocation is understood by modeling urban zones as interval graphs and further on identifying minimum dominating sets to ensure complete coverage with minimal costs. In their investigation, the authors applied a linear-time algorithm to a case study involving an urban corridor with 6 zones (to better model with interval graphs). They demonstrated that a full spatial coverage to the urban system mobility can be achieved using only three zones in the design. The results presented also discussed the efficiency, scalability of their model, and some ideas are given for its integration into practical situations for urban infrastructure design.

In \cite{related2}, another urban transportation system is studied. It is related to facing challenges in balancing taxi supply and demand, with some emphasis in less congested areas. In the investigation, an approach related to dominating sets of graphs is used to identify strategic points (which they call local hot spots) throughout the city, based on historical taxi request data and the road network. This idea is clearly useful and allows taxi drivers to find nearby customers after a trip. This contributes to reducing waiting times and increasing efficiency. In order to proceed with their approach, the investigation consider some sort of identifies neighboring nodes within a $k$-hop distance, which allows then to introduce the concept of $k$-hop dominating set. Namely, it selects nodes with the highest demand and their neighbors within a $k$-hop range. The idea is that, once a $k$-hop dominating set is detected, then the system assigns requests to the nearest drivers and redirects them to the $k$-hop dominating set nodes if they are not busy so that they might be rapidly ``accessible'' when required. The investigation contains some experimental results, using data from the New York City road network and its taxi database. As a consequence, there is a proposal to improve taxi service efficiency in such a city.

The work \cite{related3} considers the problem of location for bike stations regarding the potential demand around the locations. Aspect like population, commercial activities or public transport stations (bus, train, etc.) have to be taken into account. The study proposes the use of a Geographical Information System (GIS)-based method in order to calculate the potential distribution for the demand of trips, the location of stations and to determine the station capacities, that are necessary to establish the characteristics of the demand at each station. The investigation focuses into two directions: minimizing impedance and maximizing coverage of the system location. Although not explicitly mentioned, those concepts clearly relates to dominating sets of graphs. Minimizing impedance is traduced to station location optimized by minimizing the distance between users and location points, while maximizing coverage focuses into optimizing the total number of users covered within a particular distance radius (in the work the length 200 meters is established). The work considers some particular emphasis in the Spanish city of Madrid.

In \cite{related4}, the problem of modeling and optimizing facility locations in road networks using directed graphs (digraphs) is addressed in connection with the study of $k$-dominating sets in graphs (set of vertices in which each vertex dominates at least $k\ge 1$ vertices). A new general concept of a reachability digraph associated with a road network is introduced, in order to model the placement of refueling facilities in such road networks. The investigation contains some greedy heuristics and experimental results for searching $k$-dominating sets in digraphs (specially in their newly introduced reachability digraphs). In addition, the authors presented some results of  computational experiments that were carried out for a case study of road networks in the West Midlands (UK).

\medskip
To the best of our knowledge, these works mentioned above are probably the only ones in which domination topics in graphs are relatively clearly presented and used to solve or to model the problem of locations of stations in transportation sharing systems. In addition, we remark that none of them have considered the idea of detecting ``crucial location points'' in the system in the sense of dominating forced vertices. Thus, our investigation is the first one of this type, which indeed shows its novelty.

\section{Complexity issues}
\label{sec:complex}

In this section we are focused on the following decision problem, to first prove the difficulty of the situation that we want to address. Namely, while it would be clearly desirable to detect those key points in a transportation sharing system, we show next that such points are indeed not ``easy'' to find. We consider the following problem.

\begin{center}
\fbox{
	\parbox{0.9\textwidth}{
		\textsc{\bf Dominating Forced Vertex Problem (DFV problem)} \\
		\textit{Instance}: A simple, non-directed graph $G$ and a vertex $v\in V(G)$.\\
		\textit{Question}: Is $v$ a dominating forced vertex?}}
\end{center}

We recall that the problem above can be efficiently solved for some families of graphs like trees and interval graphs, as already shown in \cite{Ziemann} and \cite{Bouquet}, respectively. However, in general, this efficiency does not remain further as we next argue. In order to show that the computational complexity of the problem above is co-NP-complete in general, we need some extra information. First of all, we use a reduction from the well known 3-SAT problem, and some techniques that were already used in the classical proof for the NP-completeness of the decision problem regarding the domination number of graphs appeared in \cite{Garey} (see also the book \cite[page 34]{book6}). That is, the next problem.

\begin{center}
\fbox{
	\parbox{0.9\textwidth}{
		\textsc{\bf Dominating Set Problem (DomSet problem)} \\
		\textit{Instance}: A simple, non-directed graph $G$ and an integer $k\ge 1$.\\
		\textit{Question}: Does $G$ has a dominating of cardinality at most $k$?}}
\end{center}

Consider $n$ binary variables $x_1,x_2,\dots,x_n$ that can take two values: True (T) or False (F). Let $X=\{x_1,x_2,\dots,x_n\}$ and $\overline{X}=\{\overline{x_1}, \overline{x_2}, \dots, \overline{x_n}\}$ be the sets of literals concerning the variables, where $x_i$ represents a positive value of the variable and $\overline{x_i}$ its negative value (if $x_i$ has value true, then $\overline{x_i}$ has value false and vice versa). Recall that the 3-satisfiability problem (3-SAT problem), over a collection (also called formula) of 3-literals sets (called clauses) of the set $X\cup \overline{X}$,  asks for the existence of an assignment of values true (T) or false (F) to the variables $x_1,x_2,\dots,x_n$, in such a way that in each of the clauses there is at least one literal with value true (T) - which is also said that each clause is satisfied, and thus the formula is satisfied as well. In this sense, it is also said that the formula is satisfiable.

In order to prove that the DomSet problem is NP-complete, a reduction from the 3-SAT problem was used in \cite{Garey} as follows.
\begin{itemize}
    \item For any variable $x_i$, a triangle (a complete graph $K_3$) is created and its vertices denoted $T_i,F_i,v_i$.
    \item For each clause $C_j$ a single vertex $C_j$ is created.
    \item For each clause $C_j\subset X\cup \overline{X}$, if $x_i\in C_j$, then add the edge $C_jT_i$. On the contrary, if $\overline{x_i}\in C_j$, then add the edge $C_jF_i$.
\end{itemize}
For a given formula $\mathcal{F}=\{C_1,\dots,C_m\}$ of 3-literals sets over the variables $x_1,x_2,\dots,x_n$, a graph $G_F$ is constructed by using the rules above. It was proved in \cite{Garey}, that $\mathcal{F}$ is satisfiable if and only if the domination number of $G_F$ equals $n$. A representative example of a graph $G_F$ is drawn in Figure \ref{fig:G_F}, obtained from the formula $\mathcal{F}=\{C_1=\{x_1,\overline{x_2},x_3\}, C_2=\{x_2,\overline{x_3},x_4\},C_3=\{x_3,x_4,x_5\}, C_4=\{\overline{x_3},\overline{x_4},\overline{x_5}\}\}$.

\begin{figure}[h]
\centering
\begin{tikzpicture}[scale=.75, transform shape]
\node [draw, shape=circle,fill=white] (v3) at  (0,3) {};
\node at (0,3.5) {$v_3$};
\node [draw, shape=circle,fill=white] (v4) at  (4,3) {};
\node at (4,3.5) {$v_4$};
\node [draw, shape=circle,fill=white] (v5) at  (8,3) {};
\node at (8,3.5) {$v_5$};
\node [draw, shape=circle,fill=white] (v2) at  (-4,3) {};
\node at (-4,3.5) {$v_2$};
\node [draw, shape=circle,fill=white] (v1) at  (-8,3) {};
\node at (-8,3.5) {$v_1$};
\node [draw, shape=circle,fill=white] (f3) at  (1,2) {};
\node at (1.4,2.4) {$F_3$};
\node [draw, shape=circle,fill=white] (t4) at  (3,2) {};
\node at (2.6,2.4) {$T_4$};
\node [draw, shape=circle,fill=white] (f4) at  (5,2) {};
\node at (5.4,2.4) {$F_4$};
\node [draw, shape=circle,fill=white] (t5) at  (7,2) {};
\node at (6.6,2.4) {$T_5$};
\node [draw, shape=circle,fill=white] (f5) at  (9,2) {};
\node at (9.4,2.4) {$F_5$};
\node [draw, shape=circle,fill=white] (t3) at  (-1,2) {};
\node at (-1.4,2.4) {$T_3$};
\node [draw, shape=circle,fill=white] (f2) at  (-3,2) {};
\node at (-2.6,2.4) {$F_2$};
\node [draw, shape=circle,fill=white] (t2) at  (-5,2) {};
\node at (-5.4,2.4) {$T_2$};
\node [draw, shape=circle,fill=white] (f1) at  (-7,2) {};
\node at (-6.6,2.4) {$F_1$};
\node [draw, shape=circle,fill=white] (t1) at  (-9,2) {};
\node at (-9.4,2.4) {$T_1$};

\node [draw, shape=circle,fill=white] (c3) at  (2,0) {};
\node at (2.5,-0.2) {$C_3$};
\node [draw, shape=circle,fill=white] (c4) at  (6,0) {};
\node at (6.5,-0.2) {$C_4$};
\node [draw, shape=circle,fill=white] (c2) at  (-2,0) {};
\node at (-2.5,-0.2) {$C_2$};
\node [draw, shape=circle,fill=white] (c1) at  (-6,0) {};
\node at (-6.5,-0.2) {$C_1$};

\draw(f1)--(v1)--(t1)--(f1);
\draw(f2)--(v2)--(t2)--(f2);
\draw(f3)--(v3)--(t3)--(f3);
\draw(f4)--(v4)--(t4)--(f4);
\draw(f5)--(v5)--(t5)--(f5);

\draw(t1)--(c1)--(f2);
\draw(c1)--(t3)--(c3)--(t4)--(c2)--(t2);
\draw(c2)--(f3)--(c4)--(f5);
\draw(c3)--(t5);
\draw(c4)--(f4);
\end{tikzpicture}
\caption{The graph $G_F$ constructed from the formula $\mathcal{F}=\{C_1=\{x_1,\overline{x_2},x_3\}, C_2=\{x_2,\overline{x_3},x_4\},C_3=\{x_3,x_4,x_5\}, C_4=\{\overline{x_3},\overline{x_4},\overline{x_5}\}\}$.}\label{fig:G_F}
\end{figure}

Now, in order to consider our DFV problem, we adapt the construction described above as follows. We begin also with a formula $\mathcal{F}=\{C_1,\dots,C_m\}$ of 3-literals sets over the variables $x_1,x_2,\dots,x_n$, and construct a graph $G_F$ as described above. Next, to construct a graph $G'_F$, we add a graph $K_2$ with vertex vertex set  $\{z,z'\}$ and all the edges $z C_j$ for every $C_j\in \mathcal{F}$. See Figure \ref{fig:G_F-prime} for an example of a graph $G_F'$ obtained from the graph $G_F$ drawn in Figure \ref{fig:G_F}.

\begin{figure}[h]
\centering
\begin{tikzpicture}[scale=.75, transform shape]
\node [draw, shape=circle,fill=white] (v3) at  (0,3) {};
\node at (0,3.5) {$v_3$};
\node [draw, shape=circle,fill=white] (v4) at  (4,3) {};
\node at (4,3.5) {$v_4$};
\node [draw, shape=circle,fill=white] (v5) at  (8,3) {};
\node at (8,3.5) {$v_5$};
\node [draw, shape=circle,fill=white] (v2) at  (-4,3) {};
\node at (-4,3.5) {$v_2$};
\node [draw, shape=circle,fill=white] (v1) at  (-8,3) {};
\node at (-8,3.5) {$v_1$};
\node [draw, shape=circle,fill=white] (f3) at  (1,2) {};
\node at (1.4,2.4) {$F_3$};
\node [draw, shape=circle,fill=white] (t4) at  (3,2) {};
\node at (2.6,2.4) {$T_4$};
\node [draw, shape=circle,fill=white] (f4) at  (5,2) {};
\node at (5.4,2.4) {$F_4$};
\node [draw, shape=circle,fill=white] (t5) at  (7,2) {};
\node at (6.6,2.4) {$T_5$};
\node [draw, shape=circle,fill=white] (f5) at  (9,2) {};
\node at (9.4,2.4) {$F_5$};
\node [draw, shape=circle,fill=white] (t3) at  (-1,2) {};
\node at (-1.4,2.4) {$T_3$};
\node [draw, shape=circle,fill=white] (f2) at  (-3,2) {};
\node at (-2.6,2.4) {$F_2$};
\node [draw, shape=circle,fill=white] (t2) at  (-5,2) {};
\node at (-5.4,2.4) {$T_2$};
\node [draw, shape=circle,fill=white] (f1) at  (-7,2) {};
\node at (-6.6,2.4) {$F_1$};
\node [draw, shape=circle,fill=white] (t1) at  (-9,2) {};
\node at (-9.4,2.4) {$T_1$};

\node [draw, shape=circle,fill=white] (c3) at  (2,0) {};
\node at (2.5,-0.2) {$C_3$};
\node [draw, shape=circle,fill=white] (c4) at  (6,0) {};
\node at (6.5,-0.2) {$C_4$};
\node [draw, shape=circle,fill=white] (c2) at  (-2,0) {};
\node at (-2.5,-0.2) {$C_2$};
\node [draw, shape=circle,fill=white] (c1) at  (-6,0) {};
\node at (-6.5,-0.2) {$C_1$};

\node [draw, shape=circle,fill=white] (z) at  (0,-1.5) {};
\node at (0,-1) {$z$};
\node [draw, shape=circle,fill=white] (z') at  (3,-1.5) {};
\node at (3.5,-1.3) {$z'$};

\draw(c1)--(z)--(c2);
\draw(c3)--(z)--(c4);
\draw(z')--(z);

\draw(f1)--(v1)--(t1)--(f1);
\draw(f2)--(v2)--(t2)--(f2);
\draw(f3)--(v3)--(t3)--(f3);
\draw(f4)--(v4)--(t4)--(f4);
\draw(f5)--(v5)--(t5)--(f5);

\draw(t1)--(c1)--(f2);
\draw(c1)--(t3)--(c3)--(t4)--(c2)--(t2);
\draw(c2)--(f3)--(c4)--(f5);
\draw(c3)--(t5);
\draw(c4)--(f4);
\end{tikzpicture}
\caption{The graph $G'_F$ constructed from the graph $G_F$ drawn in Figure \ref{fig:G_F}.}\label{fig:G_F-prime}
\end{figure}

We now see a couple of properties of $G_F'$. Since each vertex $v_i$ from each triangle corresponding to a variable $x_i$ must be dominated by some vertex inside the triangle, as well as, the vertex $z'$ needs to be also dominated, we obtain that each dominating set of $G_F'$ must contain at least $n+1$ vertices. On the other hand, it can be readily observed that  any set of vertices of $G_F'$ formed by exactly one vertex of each triangle corresponding to a variable $x_i$, together with the vertex $z$ is a dominating set of $G_F'$. Therefore, the following observation is true.

\begin{remark}\label{rem:G_F'}
For every graph $G_F'$ we have $\gamma(G_F')=n+1$.
\end{remark}

The next result is the core of our complexity study on our DFV problem.

\begin{lemma}\label{lem:complex}
Given a formula $\mathcal{F}=\{C_1,\dots,C_m\}$ over the variables $x_1,\dots,x_n$, the vertex $z$ of $G_F'$ is not a dominating forced vertex of $G_F'$ if and only if $\mathcal{F}$ is satisfiable
\end{lemma}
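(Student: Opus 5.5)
We prove the two directions separately, using Remark~\ref{rem:G_F'}, which gives $\gamma(G_F')=n+1$. Throughout, write $D$ for a $\gamma(G_F')$-set. The basic structural fact we exploit is this: each triangle $\{T_i,F_i,v_i\}$ must contain a vertex of $D$, since $v_i$ has no neighbours outside its triangle. Likewise $\{z,z'\}$ must contain a vertex of $D$, since $N[z']=\{z,z'\}$. These $n+1$ sets are pairwise disjoint and $|D|=n+1$, so $D$ contains exactly one vertex from each triangle and exactly one vertex from $\{z,z'\}$, and contains no clause vertex.

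\emph{Satisfiable implies $z$ is not forced.} Fix a satisfying assignment and let $D=\{T_i : x_i=\mathrm{T}\}\cup\{F_i : x_i=\mathrm{F}\}\cup\{z'\}$. Then $|D|=n+1$. We check domination vertex by vertex:
\begin{itemize}
\item every triangle vertex is dominated inside its triangle;
\item $z$ is dominated by $z'$;
\item each clause $C_j$ contains a true literal, and the corresponding vertex $T_i$ or $F_i$ lies in $D$ and is adjacent to $C_j$.
\end{itemize}
Hence $D$ is a $\gamma(G_F')$-set avoiding $z$, so $z$ is not a dominating forced vertex.

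\emph{$z$ not forced implies satisfiable.} Take a $\gamma(G_F')$-set $D$ with $z\notin D$. By the counting fact above, $z'\in D$, and $D$ has exactly one vertex in each triangle and none among the clause vertices. The clause vertices are adjacent only to $z$ and to $T_i$/$F_i$ vertices, and $z\notin D$. So each $C_j$ must be dominated by some $T_i\in D$ with $x_i\in C_j$, or by some $F_i\in D$ with $\overline{x_i}\in C_j$.

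Define the assignment by setting $x_i=\mathrm{T}$ if $T_i\in D$, $x_i=\mathrm{F}$ if $F_i\in D$, and arbitrarily if $v_i\in D$. This is well defined because exactly one vertex per triangle lies in $D$. Under this assignment, each clause contains a true literal, so $\mathcal{F}$ is satisfiable.

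The only delicate point is the counting argument forcing exactly one vertex per triangle and per $\{z,z'\}$, which rules out clause vertices in $D$. Once that is established, both directions are routine.
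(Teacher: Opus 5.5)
Your proof is correct and follows essentially the same route as the paper: a $\gamma(G_F')$-set avoiding $z$ must contain $z'$, gives a satisfying assignment from the chosen $T_i$/$F_i$, and conversely a satisfying assignment plus $z'$ gives such a set. The differences are cosmetic. You set $x_i$ arbitrarily when $v_i\in D$, whereas the paper first swaps $v_\ell$ for $T_\ell$ or $F_\ell$. You also spell out the disjoint-closed-neighbourhood counting (one vertex per triangle and per $\{z,z'\}$, hence no clause vertex), which the paper only asserts parenthetically.
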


\begin{proof}
From Remark \ref{rem:G_F'}, we know that $\gamma(G_F')=n+1$. Assume that $z$ is not a dominating forced vertex of $G_F'$, which means there is a dominating set $S$ of $G_F'$ with cardinality $n+1$ such that $z\notin S$ (notice that each triangle of $G_F'$ corresponding to a variable contains exactly one vertex of $S$). Also, it must happen that $z'\in S$. Moreover, the vertices $C_1,\dots,C_m$ must be dominated by some vertices $T_i$ or $F_i$ from the triangles. In addition, we may also assume that $S$ has no vertex of type $v_\ell$. In such a case ($v_\ell\in S$ for some $\ell$), then we can simply exchange $v_\ell$ with either $T_\ell$ or $F_\ell$, and changed $S$ is still a dominating set.

Now, if there is a vertex $T_i\in S$ for some $i$, then we set the variable $x_i$ as true (T). On the contrary, if $F_i\in S$, then we set $x_i$ as false (F). Now, similar arguments as the ones used in \cite{Garey}, show that such assignment satisfies the formula $\mathcal{F}$. That is, since each $C_j=\{u,v,w\}\ni x_i$ must be dominated by $S$, such $C_j$ must be adjacent to a vertex $T_i$ (in which case $x_i$ has been assigned true) or to a vertex $F_i$ (in which case $x_i$ has been assigned false).

On the other hand, if $\mathcal{F}$ is satisfiable, then we can always construct a dominating set $S'$ of $G_F'$ of cardinality $n+1$, containing exactly one vertex of each triangle of $G_F'$ corresponding to a variable together with the vertex $z'$. Notice that (by using the same arguments as in \cite{Garey}), all the vertices $C_j$ are dominated by the vertices of the triangles, and the vertex $z$ is dominated by $z'$. As a consequence, $z$ is not a dominating forced vertex of $G_F'$.
\end{proof}

Having in mind the result above, we can deduce the following.

\begin{theorem}
\label{th:complex}
The DFV problem is co-NP-hard.
\end{theorem}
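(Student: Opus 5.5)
The plan is to reduce 3-SAT to the complement of the DFV problem, using Lemma~\ref{lem:complex}. Co-NP-hardness of DFV is equivalent to NP-hardness of its complement. The complement asks: given $G$ and $v\in V(G)$, is $v$ \emph{not} a dominating forced vertex? Equivalently, is there a $\gamma(G)$-set avoiding $v$? Since 3-SAT is NP-complete, it suffices to give a polynomial-time many-one reduction from 3-SAT to this complementary problem.

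Given a formula $\mathcal{F}=\{C_1,\dots,C_m\}$ over $x_1,\dots,x_n$, I will map it to the instance $(G_F',z)$. First I check that the construction is polynomial. $G_F'$ has $3n+m+2$ vertices and $3n+3m+m+1$ edges: the triangles, at most three edges from each clause vertex to literal vertices, the edges $zC_j$, and the edge $zz'$. So it is built in time polynomial in $n+m$.

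Correctness is exactly Lemma~\ref{lem:complex}: $z$ is not a dominating forced vertex of $G_F'$ if and only if $\mathcal{F}$ is satisfiable. Hence $\mathcal{F}\mapsto (G_F',z)$ is a reduction from 3-SAT to the complement of DFV. The same map therefore reduces the complement of 3-SAT (unsatisfiability, which is co-NP-complete) to DFV itself, because $\mathcal{F}$ is unsatisfiable if and only if $z$ is forced. This yields that DFV is co-NP-hard.

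There is essentially no obstacle, since Lemma~\ref{lem:complex} does the real work. The only point needing care is to keep the complementation straight: the reduction lands in ``not forced'', so the hardness transferred to DFV is co-NP-hardness rather than NP-hardness. I would also remark in passing that DFV belongs to co-NP, as a minimum dominating set avoiding $v$ is a certificate for a no-instance. Verifying that such a certificate has size $\gamma(G)$ is not obviously polynomial, however, so I would not claim co-NP-completeness and would state only hardness, as in the theorem.
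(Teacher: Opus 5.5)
Your proof is correct and is the paper's argument: map $\mathcal{F}$ to $(G_F',z)$ and invoke Lemma~\ref{lem:complex}, so that unsatisfiability reduces to DFV. You also usefully check that the construction is polynomial, which the paper leaves implicit.

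The only flaw is your passing remark that DFV lies in co-NP, and you should drop it rather than hedge it. A $\gamma(G)$-set avoiding $v$ is not a polynomially checkable certificate, because certifying $|S|=\gamma(G)$ means showing that no dominating set of size $|S|-1$ exists, which is itself a co-NP-complete task in general.
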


\begin{proof}
First notice that given a vertex $v$ of a graph $G$, it cannot be in general checked on whether $v$ belong to every dominating set of $G$ of cardinality $\gamma(G)$, since it is indeed a difficult problem to find the domination number of $G$. Thus, the DFV problem might not be in NP in general. Now, according to Lemma \ref{lem:complex}, it is satisfied that the vertex $z$ of the graph $G_F'$ (of the construction presented above) is a dominating forced vertex if and only if the corresponding formula $\mathcal{F}$ is not satisfiable. As a consequence, it is deduced that the DFV problem is equivalent to the complement of an NP-complete problem (3-SAT). Therefore, the deduction is completed.
\end{proof}

As a consequence of the result above, if one wants to detect the crucial points regarding a transportation sharing system as described in the previous section, then we simply cannot succeed in this task, since it is clearly ``computationally'' impossible to do it. In this sense, the remaining options are that of giving some guesses on such key vertices. To this end, one previous possible action might be giving some bounds on the number of such vertices of the transportation sharing system in question. Next two sections  deals with this research direction.

\section{On graphs having no dominating forced vertices}\label{sec:zero-DF}

The non existence of dominating forced vertices in a graph is also clearly of interest since this represents a situation in which there are no such ``crucial'' vertices in a network that might represent key points in a transportation system, when considering applications. Also, it is of theoretical interest based on the fact that this might represent a situation in which dominating sets are somehow scattered along the graph. Some trivial  examples of this situation are complete graphs, cycles, and tori, which are vertex transitive graphs. However, non vertex transitive graph could also have no dominating forced vertices.

Finding all the graphs without dominating forced vertices is clearly the same as characterizing all the graphs $G$ achieving the equality in the lower bound of \eqref{eq:triv-bounds}. Such problem seems to be indeed very challenging due to the wide structure that such graphs can have. However, if the problem is reduced to the case of trees, then it becomes easier, as already known from works like \cite{Bouquet,Mynhardt,Ziemann}, where efficient algorithms have been described for finding all dominating forced vertices, among other results. However, specific structural properties for the families of trees having no dominating forced vertices are not known. In this sense, we next give a constructive structural characterization for them. To this end, we describe two operations on trees that built bigger trees.


\medskip
\noindent Operation $\cO_1$: a tree $T$ is obtained from a tree $T'$ by adding a new edge $vx$ and the edge $uv$, where $u\in V(T')$ and there exists a private neighbor $w_S$ with respect to every (if any) $\gamma(T')$-set $S$ that contains $u$. In such a case we use the notation $T=\cO_1(T')$. See the upper diagram of Figure~\ref{f:cO1}.\medskip


\noindent Operation $\cO_2$: a tree $T$ is obtained from a tree $T'$ by adding a new path $vxy$ and the edge $uv$, where $u\in V(T')$ belongs to at least one $\gamma(T')$-set. Similarly as before,we use $T=\cO_2(T')$. See the lower diagram of Figure~\ref{f:cO1}.\medskip

\begin{figure}[htb]

\begin{center}
\begin{tikzpicture}[scale=.8,style=thick,x=1cm,y=1cm]
\def\vr{2.5pt} 
\path (1.7,0.75) coordinate (u);
\path (0.7,.75) coordinate (w);
\path (0.9,1.15) coordinate (w1);
\path (0.9,0.15) coordinate (w2);
%
\draw (u) [fill=black] circle (\vr);
\draw (w) [fill=black] circle (\vr);
\draw (u) -- (w);
\draw (w1)--(u)--(w2);
\draw [rounded corners] (0,-.2) rectangle (2,1.5);
\draw (.4,1.15) node {$T'$};
\draw (-1,.75) node {$\cO_1$:};
\draw[anchor = north] (u) node {$u$};
\draw[anchor = north] (w) node {$w$};
\draw (3,0.75) node {$\mapsto$};
\path (5.7,.75) coordinate (u);
\path (4.7,0.75) coordinate (w);
\path (6.7,.75) coordinate (v);
\path (7.7,0.75) coordinate (x);
\path (4.9,1.15) coordinate (w1);
\path (4.9,0.15) coordinate (w2);
%
\draw (w)--(u)--(v)--(x);
\draw (w1)--(u)--(w2);
\draw (u) [fill=black] circle (\vr);
\draw (w) [fill=black] circle (\vr);
\draw (v) [fill=black] circle (\vr);
\draw (x) [fill=black] circle (\vr);
\draw [rounded corners] (4,-.2) rectangle (6,1.5);
\draw[anchor = north] (u) node {$u$};
\draw[anchor = north] (w) node {$w$};
\draw[anchor = north] (v) node {$v$};
\draw[anchor = north] (x) node {$x$};
\end{tikzpicture}
\end{center}

\vskip 0.2 cm

\begin{center}
\begin{tikzpicture}[scale=.8,style=thick,x=1cm,y=1cm]
\def\vr{2.5pt} 
\path (1.7,0.75) coordinate (u);
\path (0.7,.75) coordinate (w);
\path (0.9,1.15) coordinate (w1);
\path (0.9,0.15) coordinate (w2);
%
\draw (u) [fill=black] circle (\vr);
\draw (u) -- (w);
\draw (w1)--(u)--(w2);
\draw [rounded corners] (0,-.2) rectangle (2,1.5);
\draw (.4,1.15) node {$T'$};
\draw (-1,.75) node {$\cO_2$:};
\draw[anchor = north] (u) node {$u$};
\draw (3,0.75) node {$\mapsto$};
\path (5.7,.75) coordinate (u);
\path (4.7,0.75) coordinate (w);
\path (4.9,1.15) coordinate (w1);
\path (4.9,0.15) coordinate (w2);
\path (6.7,.75) coordinate (v);
\path (7.7,0.75) coordinate (x);
\path (8.7,0.75) coordinate (y);
%
\draw (w)--(u)--(v)--(x)--(y);
\draw (w1)--(u)--(w2);
\draw (u) [fill=black] circle (\vr);
\draw (y) [fill=black] circle (\vr);
\draw (v) [fill=black] circle (\vr);
\draw (x) [fill=black] circle (\vr);
\draw [rounded corners] (4,-.2) rectangle (6,1.5);
\draw[anchor = north] (u) node {$u$};
\draw[anchor = north] (y) node {$y$};
\draw[anchor = north] (v) node {$v$};
\draw[anchor = north] (x) node {$x$};
\end{tikzpicture}
\end{center}
\vskip -0.5cm

\caption{The operations $\cO_1$ and $\cO_2$.} \label{f:cO1}
\end{figure}

A tree $T$ belongs to the class of trees $\cT$ if $T$ can be obtained from $K_2$ by a sequence of operations $\cO_1$ and $\cO_2$. Notice that $\cO_1(K_2)=P_4$ and $\cO_2(K_2)=P_5$. Also, we cannot perform $\cO_1$ on a leaf $u$ of $P_4$ because $u$ belongs to a $\gamma(P_4)$-set and there exists a $\gamma(P_4)$-set where $u$ has no private neighbor. Vertex $u$ from $T'$ in operation $\cO_1$ can also be outside of any $\gamma(T')$-set. The smallest example for this is the middle vertex of $P_5$ that belongs to no $\gamma(P_5)$-set but it can be used as $u$ in consecutive use of $\cO_1$ to obtain a subdivided star in $\cT$.

\begin{theorem} \label{main}
Let $T$ be a tree. Then $\df(T)=0$ if and only if $T\in \cT$.
\end{theorem}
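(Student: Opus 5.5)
We prove the two directions separately, each by induction on the order of $T$. Throughout we use two facts about how $\gamma$ behaves under the operations. For $\cO_1$ we have $\gamma(T)=\gamma(T')+1$: any $\gamma(T')$-set plus $v$ dominates $T$. Conversely, a $\gamma(T)$-set must contain $v$ or $x$, and after replacing $x$ by $v$ its trace on $T'$, possibly with $u$ added, dominates $T'$. For $\cO_2$ we likewise have $\gamma(T)=\gamma(T')+1$: a $\gamma(T)$-set contains $x$ or $y$, and we may assume it contains $x$. If it also contains $v$, we replace $v$ by $u$.

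\textbf{Sufficiency ($T\in\cT\Rightarrow \df(T)=0$).} The base case $K_2$ is clear. Assume $\df(T')=0$ and let $T=\cO_i(T')$.

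For $\cO_2$, fix $w\in V(T)$; we exhibit a $\gamma(T)$-set avoiding $w$.
\begin{itemize}
\item If $w\in\{x,y\}$, take a $\gamma(T')$-set $S$ containing $u$ (this exists by the hypothesis of $\cO_2$). Then $S\cup\{y\}$ avoids $x$, and $S\cup\{x\}$ avoids $y$.
\item If $w=v$, take any $\gamma(T')$-set plus $x$.
\item If $w\in V(T')$, take a $\gamma(T')$-set avoiding $w$ (it exists since $\df(T')=0$) plus $x$.
\end{itemize}

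For $\cO_1$, fix $w\in V(T)$.
\begin{itemize}
\item If $w\in V(T')$, a $\gamma(T')$-set avoiding $w$ plus $v$ works.
\item If $w=v$, we need a $\gamma(T)$-set containing $x$ but not $v$, i.e.\ a set $S'\cup\{x\}$ where $S'$ is a dominating set of $T'$ of size $\gamma(T')$ that does not use $v$. So it suffices to find a $\gamma(T')$-set $S$ avoiding $u$ (then $u$ is already dominated inside $T'$), and this exists because $\df(T')=0$.
\item If $w=x$, take any $\gamma(T')$-set plus $v$.
\end{itemize}
So sufficiency holds even without the private-neighbour hypothesis, which is needed only for the converse.

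\textbf{Necessity ($\df(T)=0\Rightarrow T\in\cT$).} First, $T$ has no strong support vertex. Stars other than $K_2$ have a forced centre, and $P_4,P_5$ lie in $\cT$, so we may assume $\mathrm{diam}(T)\ge 4$. Root $T$ at an end $r$ of a longest path $r\dots u\,v\,x\,y$, choosing $y$ to be a deepest leaf.
\begin{itemize}
\item If $x$ has another child, that child is a leaf (since $y$ is deepest), making $x$ a strong support vertex and hence forced; so $\deg(x)=2$.
\item If $v$ has a leaf child $\ell$, then $v$ is a support vertex. Every $\gamma(T)$-set then contains $v$ or $\ell$, and one of $x,y$, and a swapping argument shows $v$ is forced. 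This contradicts $\df(T)=0$.
\item If $v$ has another child $x'$ that is a weak support vertex, then $v$ dominates neither leaf. Any $\gamma$-set contains $x$ (or $y$) and $x'$ (or its leaf). Hence $v$ has a private role only if it is itself in the set, and we can show $v$ lies in no $\gamma$-set. We then remove $\{x,y\}$ and must handle this case too; the clean subcase is $\deg(v)=2$.
\end{itemize}

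\textbf{Case $\deg(v)=2$.} Set $T'=T-\{v,x,y\}$, so $T=\cO_2(T')$. We must check two things. First, $u$ lies in some $\gamma(T')$-set: otherwise every $\gamma(T)$-set meets $\{v,x,y\}$ in exactly $\{x\}$ or $\{y\}$ and, by the cardinality count, restricts to a $\gamma(T')$-set; a set containing $y$ would then leave $v$ undominated, so $x$ would be forced. Second, $\df(T')=0$: a $\gamma(T)$-set avoiding $w$ restricts, after the swaps above, to a $\gamma(T')$-set avoiding $w$, using $\gamma(T)=\gamma(T')+1$. Induction then applies.

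\textbf{Case where $v$ has children $x$ and $x'$, both of degree $2$.} Here we instead detach a pendant $P_2$ at $v$ and set $T'=T-\{x,y\}$. We must show $T=\cO_1(T')$ with attachment vertex $v$, i.e.\ $v$ has a private neighbour with respect to every $\gamma(T')$-set containing it, and that $\df(T')=0$.

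\textbf{Main obstacle.} The hardest step is this necessity case analysis. We must choose the correct decomposition ($\cO_1$ versus $\cO_2$, and which vertex plays the role of $u$), and then verify both that the private-neighbour condition holds in $T'$ and that $\df(T')=0$ is inherited. The latter is delicate when $u$ is not in any $\gamma(T')$-set, since then the restriction of a $\gamma(T)$-set need not be minimum in $T'$. If deleting at $v$ fails, we will try deleting $\{v,x\}$ (noting that $y$ would then become isolated, so this really amounts to removing the whole branch) and attaching via $\cO_1$ at $u$ instead. We will settle the choice by checking small trees such as subdivided stars.
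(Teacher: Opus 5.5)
Your skeleton matches the paper's: induction, a longest path ending in $u\,v\,x\,y$, $\deg(x)=2$, $\cO_2$ with $T'=T-\{v,x,y\}$ when $\deg(v)=2$, and $\cO_1$ at $v$ with $T'=T-\{x,y\}$ when $v$ carries further pendant paths. Your $\cO_2$ sufficiency argument is fine, but there are genuine errors elsewhere.

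\textbf{Sufficiency for $\cO_1$.} Your derivation of $\gamma(T)=\gamma(T')+1$ only gives $\gamma(T)\ge\gamma(T')$. If a $\gamma(T)$-set $D$ contains $v$ and $D\setminus\{v\}$ fails to dominate $u$, then adding $u$ costs nothing. Your remark that sufficiency holds without the private-neighbour hypothesis is false. Attaching $vx$ at a leaf $u$ of $P_4$ yields $P_6$, where $\gamma(P_6)=2=\gamma(P_4)$ and the $\gamma$-set is unique. So $\df(P_6)=2$ although $\df(P_4)=0$, and your ``$\gamma(T')$-set plus $v$'' has size $3$. The hypothesis is exactly what closes this step in the paper: in the bad case, $(D\setminus\{v\})\cup\{u\}$ would be a $\gamma(T')$-set containing $u$ in which $u$ has no private neighbour.

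\textbf{Necessity.} Your claim that $v$ cannot have a leaf child is false. Take the spider with centre $v$, one leaf $\ell$, and legs $v\,x\,y$ and $v\,x'\,y'$. It has $\df=0$; for instance, $\{\ell,x,x'\}$ is a $\gamma$-set avoiding $v$. This is the paper's first subcase. The paper handles it by $\cO_1$ at $v$ with $T'=T-\{x,y\}$, the leaf $\ell$ being a private neighbour of $v$ in every $\gamma(T')$-set containing $v$.

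The remaining case is left unresolved: $v$ has no leaf child but has other children $x_i$ that are degree-$2$ supports with leaves $y_i$. Your assertion that $v$ lies in no $\gamma$-set is false in general. For the path $y\,x\,v\,u\,p\,q\,r$ with an extra leg $v\,x'\,y'$, the set $\{y,y',v,q\}$ is a $\gamma$-set. You also never verify the private-neighbour condition, which is the crux. The missing step is short. Let $S$ be a $\gamma(T')$-set containing $v$. No $x_i$ is a private neighbour of $v$, because every pendant edge $x_iy_i$ meets $S$. If $u$ were not private either, then either $S\setminus\{v\}$ (when some $x_i\in S$) or $(S\setminus\{v,y_1\})\cup\{x_1\}$ (when all $y_i\in S$) would be a smaller dominating set of $T'$. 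So $u$ is a private neighbour.

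Inheritance of $\df(T')=0$ is then routine. Take a $\gamma(T)$-set $D$ avoiding $z$ and delete its vertex in $\{x,y\}$. If $v$ becomes undominated, all $y_i$ lie in $D$; then use whichever of $(D\setminus\{x\})\cup\{v\}$ and $(D\setminus\{x,y_1\})\cup\{x_1\}$ avoids $z$. Likewise, in your $\cO_2$ case the swap $v\mapsto u$ breaks when $z=u$. It should be replaced by $v\mapsto u'$ for a neighbour $u'$ of $u$ in $T'$. As written, necessity is established only when $\deg(v)=2$.
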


\begin{proof}
Assume first that $T\in \cT$. We will show that $\df(T)=0$ by induction on the length $k$ of the sequence of operations $\cO_1$ and $\cO_2$. Clearly, $\df(K_2)=0$ and the basis holds. Let $T'$ be a tree obtained from $K_2$ by some sequence of operations $\cO_1$ and $\cO_2$ of length $k$. By induction hypothesis we have $\df(T')=0$. Therefore, for any $z\in V(T')$, there exists a $\gamma(T')$-set $S_z$ such that $z\notin S_z$.

Assume first that $T=\cO_1(T')$, where the new edge $vx$ is added and the edge $uv$ is added to $u\in V(T')$. According to the operation, notice that in every (if any) $\gamma(T')$-set $S$ that contains $u$ there exists a private neighbor $w_S$ of $u$ with respect to $S$. Hence, since $w_S$ is a private neighbor of $u$ with respect to $S$, we observe that the set $S\setminus\{u\}\cup\{v\}$ is not a dominating set of $T$. Also, if $u$ does not belong to a $\gamma(T')$-set $S'$, in order to dominate $x,v$, at least one of them must be added to $S'$ to dominate $T$. In any case, we deduce that $\gamma(T)=\gamma(T')+1$. Clearly, $A=S_z\cup\{v\}$ and $B=S_z\cup\{x\}$ are $\gamma(T)$-sets and $z$ does not belong to them. So, any vertex $z\in V(T')$ is not a dominating forced vertex of $T$. Also, $v$ and $x$ are not dominating forced vertices of $T$ because $v\notin B$ and $x\notin A$ (for any $z$). Thus, $\df(T)=0$ and we are done with operation $\cO_1$.

For operation $\cO_2$, let $T=\cO_2(T')$, where $vxy$ is a new path together with the edge $uv$ and $u\in V(T')$ belong to a $\gamma(T')$-set $D$. In this situation, we readily see that $A=D\cup\{y\}$ is a $\gamma(T)$-set and $B_z=S_z\cup\{x\}$ is a $\gamma(T)$-set. Hence, $\gamma(T)=\gamma(T')+1$. Moreover, $z,v,y\notin B_z$ for any $z\in V(T')$ and $x\notin A$. Again $T$ contains no dominating forced vertices and we have $\df(T)=0$. Both operations $\cO_1$ and $\cO_2$ preserve the property of no dominating forced vertices and we are done with this direction.

Conversely assume that $\df(T)=0$ holds for a tree $T$ of order $n$. We will show that $T\in \cT$ by induction on $n$. Clearly, $n\geq 2$. If $n=2$, then $T=K_2$ and $T\in \cT$. There is no three $T$ with $n=3$ and $\df(T)=0$. If $\df(T)=0$ and $n=4$, then $T=P_4$ and $T=\cO_1(K_2)\in \cT$. If $\df(T)=0$ and $n=5$, then $T=P_5$ and $T=\cO_3(K_2)\in \cT$. Let now $n>5$.

Our strategy in the rest of the proof is to reduce the tree $T$ to a tree $T'$ with $\df(T')=0$ of order less than~$n$, apply the inductive hypothesis to $T'$, that is $T'\in \cT$, and then reconstruct $T$ from $T'$ by applying one of the operations $\cO_1$ or $\cO_2$, to show that $T \in \cT$. We state this formally, since we will frequently use the following statement.

\begin{statement}\label{stat}
If $\df(T')=0$ for a tree $T'$ of order less than~$n$ and $T$ can be constructed from $T'$ by applying one of the operations $\cO_1$ or $\cO_2$, then $T \in \cT$.
\end{statement}

Let $a$ and $a'$ be diametrical vertices of $T$, that is, vertices at a largest possible distance in $T$, and denote the path between $a$ and $a'$ by $P$. Clearly, $a$ and $a'$ are leaves and let further $b$ be the neighbor of $a$. If ${\rm deg}(b)>2$, then by the maximality of $P$, the vertex $b$ must be a strong support, and so, it is a dominating forced vertex, a contradiction with $\df(T)=0$. So, ${\rm deg}(b)=2$ and let $c\neq a$ be the other neighbor of $b$ on $P$.

Suppose first that ${\rm deg}(c)\geq 3$. If $c$ is a support, then it is a weak support because $\df(T)=0$. Assume that $w$ is the leaf adjacent to $c$. If ${\rm deg}(c)=3$, then let $d$ be the missing neighbor of $c$. If $d=a'$, then $c$ is a dominating forced vertex of $T$, a contradiction. Hence there exists a neighbor $e\neq c$ of $d$ on $P$. If ${\rm deg}(c)\geq 4$, then let $b_i\neq w$ be a neighbor of $b$ that is not on $P$ for $i\in\{1,\dots,{\rm deg}(c)-3\}$. In this case $b_i$ has exactly one leaf neighbor $a_i$ for every $i\in\{1,\dots,{\rm deg}(c)-3\}$ because $\df(T)=0$. So, diameter of $T$ is at least four and there exists a neighbor $d\neq b$ of $c$ on $P$ and a neighbor $e\neq c$ of $d$ on $P$.  In both cases let $T'=T-\{a,b\}$. Notice that $c$ has a private neighbor $w$ in every $\gamma(T')$-set that contains $c$. Therefore $T=\cO_1(T')$ for $x=a,v=b$ and $u=c$ and $T\in\cT$ by Statement \ref{stat}.

Next, we have no leaf adjacent to $c$. Since ${\rm deg}(c)\geq 3$, there exist some neighbors $b_i$, $i\in\{1,\dots,{\rm deg}(c)-2\}$, of $b$ that are not on $P$. Again, $b_i$ has exactly one leaf neighbor $a_i$ for every $i\in\{1,\dots,{\rm deg}(c)-2\}$ because $c$ has no leaf and because $\df(T)=0$. As before, the diameter of $T$ is at least four and there exists a neighbor $d\neq b$ of $c$ on $P$ and a neighbor $e\neq c$ of $d$ on $P$. Let $S$ be a $\gamma(T)$-set that contains $c$ if it exists. Vertices $b,b_1,\dots,b_{{\rm deg}(c)-2}$ are not private neighbors of $c$. If also $d$ is not a private neighbor of $c$, then $(S-\{c,a\})\cup\{b\}$ is a dominating set of $T$ of cardinality smaller than $S$, a contradiction. So, $c$ has a private neighbor $d$ with respect to any $\gamma(T)$-set that contains $c$ (if any). For $T'=T-\{a,b\}$ we have  $T=\cO_1(T')$ for $x=a,v=b$ and $u=c$ and $T\in\cT$ by Statement \ref{stat}.

We are left with possibility that ${\rm deg}(c)= 2$ and again let $d\neq b$ be a neighbor of $c$ on $P$. If $d=a'$, then $T=P_4$ and $T=\cO_1(K_2)$. So, $T\in\cT$ by Statement \ref{stat} since $K_2\in\cT$. Otherwise, $d\neq a'$ and let $e\neq c$ a neighbor of $d$ on $P$. Let $T'=T-\{a,b,c\}$. If $d$ is in no $\gamma(T')$-set $S$, then we have $\gamma(T)=\gamma(T')+1$ and $S\cup\{b\}$ is a $\gamma(T)$-set. Hence, $b$ is a dominating forced vertex, a contradiction. So, $d$ belongs to a $\gamma(T')$-set and we have $T=\cO_2(T')$ for $y=a,x=b,v=c$ and $u=d$. By Statement \ref{stat} we have $T\in\cT$.
\end{proof}

\section{An upper bound on the value of $\df(G)$}\label{sec:upper-bound}

It is nowadays a folklore result that the domination number of an isolated free graph $G$ of order $n$ is at most $n/2$. Based on the inequality \eqref{eq:triv-bounds}, this leads to $\df(G)\le n/2$. However, as we next show this can be significantly reduced.

\begin{theorem}\label{upper}
If $G$ is a graph of order $n$, then $\df(G)\le n/3$ and this bound is tight.
\end{theorem}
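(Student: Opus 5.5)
Fix a minimum dominating set $D$ of $G$ and let $F\subseteq D$ be the set of dominating forced vertices, so that $|F|=\df(G)$. The plan is to attach to each $u\in F$ at least two vertices outside $D$, with distinct forced vertices receiving disjoint pairs. That gives $n\ge |F|+2|F|=3\df(G)$.

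The key step is a private-neighbor lemma. Let $u\in F$. By minimality of $D$, the vertex $u$ has a private neighbor with respect to $D$, either $u$ itself or some vertex of $V(G)\setminus D$. Since $G$ has no isolated vertices, we may use the standard choice of a $\gamma(G)$-set in which every vertex has an external private neighbor, obtained from any $\gamma(G)$-set by exchanges. So $u$ has at least one private neighbor $w\in V(G)\setminus D$. Now suppose $u$ had exactly one private neighbor $w$ outside $D$, and that $u$ itself is dominated by $D\setminus\{u\}$. Then $(D\setminus\{u\})\cup\{w\}$ is a dominating set: $w$ covers itself, the only vertex formerly dominated solely by $u$ was $w$, and $u$ is still dominated. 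This is a $\gamma(G)$-set avoiding $u$, contradicting $u\in F$. If instead $u$ is not dominated by $D\setminus\{u\}$, the same swap with $w$ still dominates $u$, since $w\sim u$. So in either case a forced vertex $u$ must have at least two private neighbors with respect to $D$ outside $D$, call them $w_u,w'_u$. The crux is that for any single private neighbor $w$ of $u$, the set $(D\setminus\{u\})\cup\{w\}$ fails to dominate only if some other vertex private to $u$ is not adjacent to $w$, so at least two external private neighbors must exist.

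Private neighbors of distinct vertices of $D$ are distinct by definition. Hence the sets $\{u,w_u,w'_u\}$ for $u\in F$ are pairwise disjoint, which gives $3\df(G)\le n$.

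\textbf{Main obstacle.} The step I expect to be delicate is guaranteeing that every private neighbor is external. If $u$ is isolated in $G[D]$ it is its own private neighbor, so "exactly one private neighbor" needs care. The argument above handles this by swapping with an external private neighbor, but the existence of even one external private neighbor needs the isolate-free hypothesis together with the Bollob\'as--Cockayne choice of $D$. I would first check that the exchanges used to reach such a $D$ keep it a $\gamma(G)$-set; they do, because it stays minimum. Forced vertices lie in every $\gamma(G)$-set, so this choice costs nothing.

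\textbf{Tightness.} Take the corona $H\odot \overline{K_2}$ for any connected graph $H$, so that each vertex of $H$ gets two pendant leaves. Every vertex of $H$ is a strong support vertex and hence forced, and $n=3|V(H)|$, which gives $\df=n/3$.
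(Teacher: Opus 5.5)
Your proof is correct and is essentially the paper's argument: fix a $\gamma(G)$-set $D$, and use the swap $(D\setminus\{u\})\cup\{w\}$ to show that every forced vertex has at least two external private neighbors. Disjointness of the private-neighbor sets then gives $n\ge 3\,\df(G)$, and the corona $H\odot N_2$ shows tightness. The one difference is that you invoke the Bollob\'as--Cockayne theorem to guarantee an external private neighbor, which is unnecessary. If a forced $u$ had none, then replacing $u$ by any neighbor $y$ already gives a $\gamma(G)$-set avoiding $u$ (or, if $y\in D$, a dominating set smaller than $D$), so the bound holds for every $\gamma(G)$-set, as the paper argues.
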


\begin{proof}
If $G$ has no dominating forced vertices, then clearly $\df(G)\le n/3$. Hence, we may assume that $G$ has at least one dominating forced vertex $x$. Let $D\subset V(G)$ be a $\gamma(G)$-set. Clearly $x\in D$. If $pn(x,D)=\emptyset$, then $(D-\{x\})\cup\{y\}$, $y\in N(x)$, is a $\gamma(G)$-set without $x$, a contradiction. So, $|pn(x,D)|\geq 1$.

Suppose $|pn(x,D)|=1$ and let $x'\in pn(x,D)$. Hence, we readily observe that the new set $D'=D\setminus \{x\}\cup \{x'\}$ is also a dominating set of $G$ (indeed a $\gamma(G)$-set), and this is not possible since $x$ is a dominating forced vertex. Thus,
\begin{equation}
\label{eq:pn}
|pn(x,D)|\ge 2.
\end{equation}

Assume now that $D'\subseteq D$ is the set of dominating forced vertices of $G$, which means that $\df(G)=|D'|$. Let $S=\bigcup_{v\in D'} pn(v,D)$ and let $Q=V(G)\setminus (D\cup S)$. Notice that $|S|=\left|\bigcup_{v\in D'} pn(v,D) \right|\ge 2|D'|=2\cdot\df(G)$. On the other hand, it follows that
\begin{equation}
\label{eq:bound-n-dfv}
n=|D'|+|D\setminus D'|+|S|+|Q|\ge \df(G)+|D\setminus D'|+2\cdot\df(G)+|Q|\ge 3\cdot\df(G),
\end{equation}
which leads to the desired bound.

To see the tightness of the bound, we consider a corona graph $H\odot N_2$ for any graph $H$ and the edgeless graph $N_2$. It can be then easily checked that $\df(H\odot N_2)=\gamma(H\odot N_2)=|V(H)|$, since $H\odot N_2$ has a unique minimum dominating set of cardinality $|V(H)|$ formed by the set of vertices of $H$.
\end{proof}

Next we give a structural characterization of all graphs that are sharp for the upper bound above. In the proof we strongly rely on the proof of Theorem \ref{upper}, in particular on (\ref{eq:pn}) and (\ref{eq:bound-n-dfv}).

\begin{theorem}\label{char}
Let $G$ be a graph of order $n$.  Then $\df(G)=n/3$ if and only if $G$ has the unique $\gamma(G)$-set $D$ such that $G[D]$ contains components $G_1,\dots,G_k$ where $G[N[V(G_i)]]=G_i\odot N_2$ for every $i\in\{1,\dots,k\}$ and $N[V(G_1)],\dots,N[V(G_k)]$ is a partition of $V(G)$.
\end{theorem}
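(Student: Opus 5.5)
For the ``if'' direction, the plan is to compute $\gamma(G)$ and $\df(G)$ from the stated structure. Since $N[V(G_1)],\dots,N[V(G_k)]$ partition $V(G)$ and each $G[N[V(G_i)]]=G_i\odot N_2$, we have $n=\sum_i 3|V(G_i)|=3|D|$. By hypothesis $D$ is the unique $\gamma(G)$-set, so every vertex of $D$ is a dominating forced vertex and no other vertex is. Hence $\df(G)=|D|=n/3$. (Uniqueness can even be checked directly: each vertex of $D$ has two leaves attached in the corona, so it is a strong support vertex and therefore forced. It follows that $\gamma(G)\ge |D|$, and equality holds.)

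For the ``only if'' direction, I assume $\df(G)=n/3$ and trace the equality case in (\ref{eq:bound-n-dfv}), keeping the notation of the proof of Theorem~\ref{upper}: $D$ is a $\gamma(G)$-set, $D'\subseteq D$ is the set of forced vertices, $S=\bigcup_{v\in D'}pn(v,D)$, and $Q=V(G)\setminus(D\cup S)$. Equality throughout forces three things:
\begin{itemize}
\item $D\setminus D'=\emptyset$, so $D=D'$, meaning every vertex of the chosen $\gamma(G)$-set is forced. Since every $\gamma(G)$-set contains $D'$ and has size $|D'|$, $D$ is the unique $\gamma(G)$-set.
\item $Q=\emptyset$.
\item $|S|=2|D|$. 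The private neighbour sets of distinct vertices are disjoint, and each has size at least $2$ by (\ref{eq:pn}), so each vertex $x\in D$ has exactly two private neighbours $x_1,x_2$.
\end{itemize}
Thus $V(G)=D\cup S$, and every vertex outside $D$ is a private neighbour of exactly one vertex of $D$; it has exactly one neighbour in $D$.

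Next I establish the structure, and I expect this to be the main step. I must show that $x_1$ and $x_2$ have no neighbours other than $x$, that is, no edges inside $S$. Suppose $x_1$ is adjacent to some $y\in S$. I will try an exchange argument to produce a $\gamma(G)$-set avoiding $x$, contradicting that $x$ is forced. The simplest exchange, replacing $x$ by $x_1$, fails if $x_2$ is then undominated. In that case I will test the two possible adjacencies:
\begin{itemize}
\item If $x_1x_2$ is an edge, then $(D\setminus\{x\})\cup\{x_1\}$ dominates $x$, $x_1$ and $x_2$, and it is a $\gamma(G)$-set without $x$, a contradiction.
\item If $x_1y$ is an edge with $y\in pn(w,D)$ for some $w\ne x$, then replacing $x$ by $x_1$ still leaves $x_2$ undominated. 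So I will instead look for a counting contradiction with the uniqueness of $D$, or use the forced status of $w$ together with a swap such as $\{x,w\}\to\{x_1,\cdot\}$.
\end{itemize}
I am unsure whether such cross edges are actually excluded. If they are not, the characterization only needs $G[N[V(G_i)]]$ as an induced subgraph to equal the corona, which already excludes edges within $S$. To decide, I will first check $P_6$ with a path $x\,y$ in $D$, where the leaves of $x$ and $y$ are joined. Concretely, if some edge $x_1y_1$ existed with $y_1\in pn(y,D)$, I would test whether $(D\setminus\{x,y\})\cup\{x_1,\,\cdot\}$ can dominate. I expect to obtain the contradiction from the other private neighbours $x_2,y_2$ still needing domination, and I would try the swap $\{x,y\}\to\{x_1,y_1\}$ combined with the $x_2,y_2$ condition.

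Once $S$ is shown to be independent with each $x_i$ a leaf, I take $G_1,\dots,G_k$ to be the components of $G[D]$. Because each vertex of $S$ is attached to exactly one vertex of $D$, we get $N[V(G_i)]=V(G_i)\cup\{x_1,x_2: x\in V(G_i)\}$. These sets partition $V(G)$, and each induces exactly $G_i\odot N_2$, which completes the characterization.
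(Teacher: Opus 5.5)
Your ``if'' direction and your analysis of equality in (\ref{eq:bound-n-dfv}) are correct. You correctly get that $D$ is the unique $\gamma(G)$-set, $V(G)=D\cup S$, and every vertex outside $D$ has exactly one neighbour in $D$. You also correctly get that each $x\in D$ has exactly two private neighbours $x_1,x_2$, which are non-adjacent by the swap $(D\setminus\{x\})\cup\{x_1\}$. Your parenthetical ``strong support'' remark is wrong, though. The two vertices attached to $x$ in $G_i\odot N_2$ need not be leaves of $G$, because the statement allows edges between different parts $N[V(G_i)]$. So uniqueness of $D$ is a genuine hypothesis in that direction, not a consequence.

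The gap is the step you call the main one, and it cannot be closed. The test case you propose settles it against you. In $P_6=p_1\cdots p_6$ the unique $\gamma$-set is $\{p_2,p_5\}$, and $p_3p_4$ joins private neighbours of $p_2$ and $p_5$. So ``no edges inside $S$'' is already false. This is harmless for the statement only because $p_2,p_5$ lie in different components of $G[D]$.

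Now make $x=p_2$ and $y=p_5$ adjacent, as in your description, by adding the edge $p_2p_5$; call the result $H$.
\begin{itemize}
\item $H$ still has $\gamma(H)=2$ with unique $\gamma(H)$-set $\{p_2,p_5\}$. Any $2$-set must meet $\{p_1,p_2\}$ and $\{p_5,p_6\}$, and the other three choices miss $p_3$ or $p_4$. Hence $\df(H)=2=n/3$.
\item However, $G[D]\cong K_2$ is a single component, and $H[N[V(G_1)]]=H$ has six edges while $K_2\odot N_2$ has five.
\end{itemize}
So no exchange or counting argument can exclude an edge $x_1y_1$ between private neighbours of distinct vertices of the same component. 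The ``only if'' direction is false as stated.

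The paper's proof has the same hole: it rules out only the edge $x_1x_2$ and then asserts $G[N[V(G_i)]]=G_i\odot N_2$. What the equality analysis really yields is weaker. It shows that $G_i\odot N_2$ is a spanning subgraph of $G[N[V(G_i)]]$ with $x_1x_2\notin E(G)$. Extra edges are possible only between private neighbours of distinct vertices of $D$. Put more simply, $\df(G)=n/3$ if and only if $G$ has a unique $\gamma(G)$-set and $\gamma(G)=n/3$. A correct proof has to target a statement of that form rather than the corona equality.
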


\begin{proof}
Let first $G$ be a graph with $\df(G)=n/3$. As in the proof of Theorem \ref{upper} let $D$ be a $\gamma(G)$-set, $D'\subseteq D$ be the set of all dominating forced vertices, $S=\bigcup_{v\in D'} pn(v,D)$ and let $Q=V(G)\setminus (D\cup S)$. Since $\df(G)=n/3$, we have equalities in (\ref{eq:bound-n-dfv}) and the second inequality gives $Q=D\setminus D'=\emptyset$ and now the first inequality yields $|D'|=|D|=\df(G)$ and $|S|=2\df(G)$. So, all vertices of a $\gamma(G)$-set $D$ are dominating forced vertices and $D$ is the unique $\gamma(G)$-set because $D\setminus D'=\emptyset$. We have further $|pn(x,D)|=2$ for every $x\in D$ by (\ref{eq:pn}) since $|S|=2\df(G)$ and let $pn(x,D)=\{x_1,x_2\}$ for every $x\in D$. If $x_1x_2\in E(G)$, then $(D-\{x\})\cup\{x_1\}$ is a $\gamma(G)$-set, a contradiction with uniqueness of $D$. So, $G[N[V(G_i)]]=G_i\odot N_2$ for every $i\in\{1,\dots,k\}$. Finally, because $Q=\emptyset$, vertices of $N[V(G_1)],\dots,N[V(G_k)]$ partition $V(G)$.

For the other direction let $G$ be a graph with the unique $\gamma(G)$-set $D$ such that $G[D]$ contains components $G_1,\dots,G_k$ where $G[N[V(G_i)]]=G_i\odot N_2$ for every $i\in\{1,\dots,k\}$ and $N[V(G_1)],\dots,N[V(G_k)]$ is a partition of $V(G)$. As $D$ is the unique $\gamma(G)$-set, we have $\df(G)=|D|$. Let $G_1,\dots,G_k$ be components of $G[D]$. Since $G[N[V(G_i)]]=G_i\odot N_2$ for every $i\in\{1,\dots,k\}$ and $N[V(G_1)],\dots,N[V(G_k)]$ is a partition of $V(G)$, we get $\df(G)=|D|=n/3$.
\end{proof}

\section{The study cases}

Once presented our theoretical results on the dominating forced vertices of graphs, we focus in this section into applying our arguments to detect the ``best'' location points for a hypothetical transportation sharing systems that could be developed on the two main cities of the ``Campo de Gibraltar'' area in Spain. That is, the cities of Algeciras and La L\'inea de la Concepci\'on.

\subsection{The methodology}

Assume we are given with an urban network where we would be willing to develop a transportation sharing system, in which we are required to detect key points in the sense of our settings. Hence, the following process needs to be achieved.
\begin{description}
    \item[Step 1] Design the network (a graph $G$) modeling the urban area in question.
    \item[Step 2] Develop some procedures that will extract all the possible minimum dominating sets of $G$, that is the collection $\mathcal{C}=\{D_1,\dots,D_r\}$ where each $D_i\in \mathcal{C}$ is a $\gamma(G)$-set.
    \item[Step 3] Find the intersection of all the sets $D_i\in \mathcal{C}$.
\end{description}

Unfortunately, based on the difficulty of the DomSet and DFV problems, one cannot proceed with such a process. In consequence, one needs to find approximation techniques to manage the situation. In this sense, the two last steps from the three above ones can be approached in the following way. Regarding Step 2, an exact approach to determine all dominating forced vertices is computationally prohibitive, especially for large urban networks. Thus, a heuristic approach is a reasonable choice as it is able to approximate the best solution at a reasonable cost. In \cite{Casado}, a metaheuristic algorithm was proposed to find such solutions using the Iterated Greedy (IG) framework presented in Algorithm 1 from \cite{Casado}. In addition, the solutions provided by such metaheuristic algorithm are not necessarily the best ones. Thus, regarding Step 3, one can approach by considering the intersection of ``most of the best'' solutions that such algorithm will provide, even so, such solutions will not have the same cardinality. That is, by making use of their methodology, we establish a further framework to find vertices that appear in (hopefully) several minimum dominating sets of a graph.

The IG metaheuristic is a stochastic search method designed to navigate the complex landscapes of NP-hard optimization problems. Although simple greedy heuristics are computationally efficient, they are prone to becoming trapped in local optima. The IG framework overcomes this by using an iterative process that repeatedly transforms a complete solution into a partial one and subsequently reconstructs it. This cycle allows the algorithm to ``jump'' out of local minima by strategically perturbing the current best candidate while retaining its most effective structural components.

The authors in \cite{Casado} addressed the problem of finding the minimum dominating set in a graph $G$, which is referred to as the Minimum Dominating Set Problem (MDSP), using the IG framework. We call such a framework \texttt{IG-MDSP}. We recommend that readers refer to their work for details on each step. In brief, the procedure is conducted as follows: let $0 \le \beta \le 1$ and $\Delta$ be two given parameters.
\begin{enumerate}
    \item \texttt{InitialSolution($G$)}: Generate a dominating set using a greedy algorithm and perform the ``Checking Procedure" (or \texttt{CheckP}) to remove redundant vertices, producing a minimal dominating set $D$ (recall that a dominating set is minimal if it does not contain a proper subset that is also a dominating set).
    \item \texttt{LocalImprovement($D$)}: Replace a vertex in $D$ with a vertex in $V \setminus D$ that has a certain property. Perform \texttt{CheckP} again to remove redundant vertices, resulting in a new feasible solution.
    Do it iteratively, in random order, for every vertex in $D$ and every suitable exchange candidate in $V \setminus D$ until we find the first solution that has a smaller size than $D$. We call this new solution $D_b$.
    \item Set $\delta \leftarrow 0$ and repeat Steps 4–7 while $\delta < \Delta$. If $\delta = \Delta$, then go to Step 8.
    \item \texttt{Destruction($D_b,\beta$)}: Remove a portion of vertices from $D_b$, namely $\beta \cdot |D_b|$, at random. Let the resulting set be $D_d$. This set is not dominating, so a reconstruction is needed.
    \item \texttt{Reconstruction($D_d$)}: Starting from $D_d$, do the greedy algorithm to add new vertices so that it forms a dominating set, yielding the set $D_r$.
    \item Set $D_i \leftarrow$ \texttt{LocalImprovement($D_r$)}
    \item If $|D_i| < |D_b|$, then an improved solution is found, and we set $D_b \leftarrow D_i$ and $\delta \leftarrow 0$. If not, then we set $\delta \leftarrow \delta + 1$.
    \item Return $D_b$.
\end{enumerate}

As explained before, to obtain the dominating forced vertices of a graph, it is necessary to get all minimum dominating sets and look for vertices that occur in every such set. However, realistically speaking, an urban network has a relatively large size, making it difficult to find all minimum dominating sets. Thus, it is fair enough to find the approximate vertices in the sense that they appear frequently in many minimum dominating set approximations. Taking this into account, we use a simple two-step framework to obtain such vertices. First, we perform the \texttt{IG-MDSP} several times, preferably a large number of iterations, to obtain a collection of minimum dominating set approximations (which are minimal dominating sets). Then, we compile the vertices that are included in most, hopefully all, of these sets.

All computational experiments in this study were conducted on a computer equipped with an AMD Ryzen 5 7520U (4 cores, 2.80 GHz) and 16 GB of RAM. Algorithms were implemented in Python 3.12 (64-bit) using NetworkX 3.3 for graph manipulation. All implementations of \texttt{IG-MDSP} in this study use the parameters $\beta = 0.1$ and $\Delta = 100$.

\subsection{Validation results}

In order to first see that our arguments run appropriately, this subsection is focused on developing some implementations and experiments while using some graphs with a known value for the domination number and indeed for the set of their dominating forced vertices. To this end, we consider the case of the \textit{strong grid}, i.e., the strong product of two paths. An $m \times n$ strong grid, also known as the king's graph, is the graph where a vertex represents a square on an $m \times n$ chessboard and each edge represents a legal move by a king \cite{Weisstein}. For our analysis purposes, we observe the strong grid $P_{3n} \boxtimes P_{3n}$, where $V(P_{3n} \boxtimes P_{3n}) = \{(i,j) : 1 \le i,j \le 3n\}$. One can readily verify that the set $D = \{(3i-1,3j-1) : 1 \le i,j \le n\}$ forms the unique minimum dominating set of size $n^2$ of such graphs. Thus, all vertices in $D$ are dominating forced.

Table \ref{tab: strong grids} summarizes the 2,000 attempts of the \texttt{IG-MDSP} algorithm on strong grids $P_{3n} \boxtimes P_{3n}$ for small $n$, providing the frequencies of sets according to their cardinalities.

\begin{table}[htbp]
    \centering
    \caption{Size distribution of 2,000 minimal dominating sets produced by the \texttt{IG-MDSP} algorithm on strong grids $P_{3n}\boxtimes P_{3n}$.}
    \label{tab: strong grids}

    \begin{subtable}{.45\textwidth}
        \centering
        \caption{$n=2$}
        \begin{tabular}{lc}
            \toprule
            Set size & 4 \\
            \midrule
            Frequency & 2,000 \\
            \bottomrule
        \end{tabular}
    \end{subtable}
    \hfill
    \begin{subtable}{.45\textwidth}
        \centering
        \caption{$n=3$}
        \begin{tabular}{lcccc}
            \toprule
            Set size & 9 & 10 & 11 & 12 \\
            \midrule
            Frequency & 1,927 & 72 & 0 & 1 \\
            \bottomrule
        \end{tabular}
    \end{subtable}

    \vspace{1em}

    \begin{subtable}{.45\textwidth}
        \centering
        \caption{$n=4$}
        \begin{tabular}{lccccc}
            \toprule
            Set size & 16 & 17 & 18 & 19 & 20 \\
            \midrule
            Frequency & 1,167 & 657 & 152 & 17 & 7 \\
            \bottomrule
        \end{tabular}
    \end{subtable}

    \vspace{1em} 

    \begin{subtable}{\textwidth}
        \centering
        \caption{$n=5$}
        \small 
        \begin{tabular}{lcccccccccc}
            \toprule
            Set size & 25 & 26 & 27 & 28 & 29 & 30 & 31 & 32 & 33 & 34\\
            \midrule
            Frequency & 39 & 202 & 390 & 444 & 296 & 355 & 201 & 59 & 13 & 1\\
            \bottomrule
        \end{tabular}
    \end{subtable}

    \vspace{1em}

    \begin{subtable}{\textwidth}
        \centering
        \caption{$n=6$}
        \small
        \begin{tabular}{lccccccccccc}
            \toprule
            Set size & 38 & 39 & 40 & 41 & 42 & 43 & 44 & 45 & 46 & 47 & 48 \\
            \midrule
            Frequency & 1 & 9 & 37 & 138 & 323 & 443 & 467 & 330 & 177 & 60 & 15 \\
            \bottomrule
        \end{tabular}
    \end{subtable}

    \vspace{1em}

    \begin{subtable}{\textwidth}
        \centering
        \caption{$n=7$}
        \small
        \begin{tabular}{lcccccccccccccc}
            \toprule
            Set size & 55 & 56 & 57 & 58 & 59 & 60 & 61 & 62 & 63 & 64 & 65 & 66 & 67 & 68\\
            \midrule
            Frequency & 2 & 3 & 15 & 55 & 112 & 340 & 487 & 483 & 325 & 122 & 39 & 10 & 5 & 2\\
            \bottomrule
        \end{tabular}
    \end{subtable}
\end{table}

The results for the strong grids $P_{3n} \boxtimes P_{3n}$ and the larger test instances demonstrate a clear trend in the performance of the \texttt{IG-MDSP} algorithm. For smaller instances ($2 \le n \le 4$), the algorithm consistently identifies the minimum dominating set size in the vast majority of the 2,000 iterations. For example, in $n=3$ instance, the optimal size of 9 was reached in $96.35\%$ of the tests. However, as the graph complexity increases ($5 \le n \le 7$), we observe a characteristic shift:
\begin{itemize}
    \item Distribution spread: the distribution of found set sizes becomes broader, following a near-normal distribution centered above the best-known value.
    \item Low best-case frequency proportion: for $n=5$, the best-found size, which is 25, was achieved in only $1.95\%$ of the iterations.
\end{itemize}
This behavior is expected for co-NP-hard problems. From an urban planning perspective, these results suggest that while finding the absolute minimum $\gamma(G)$ becomes computationally expensive as the network grows, the algorithm remains highly stable. By taking those vertices that appear in all minimal dominating sets produced by the algorithm, we can reasonably suggest that they act like a dominating forced vertex and have a high probability of being the ``non-negotiable'' locations for station placement.

\subsection{The particular study case for the city of Algeciras}

We are now ready to apply the mentioned methodology to study the dominating forced vertices approximation in a particular urban network, namely the city of Algeciras.

First, we extract the Algeciras road network using OSMnx \cite{Boeing}, a Python package designed to retrieve, model, and analyze street networks from OpenStreetMap (\texttt{https://www.openstreetmap.org}). OpenStreetMap is an open-access, crowd-sourced geospatial dataset maintained under the Open Database License. We refer the reader to Boeing \cite{Boeing} and the official documentation (\texttt{osmnx.readthedocs.io}) for a complete introduction to the package. Using OSMnx, we download the street network data of Algeciras directly from the OpenStreetMap database and construct a corresponding graph representation as a \texttt{NetworkX} graph object.


In our study, we focus on nodes within a radius of 3 kilometers centered at Plaza Alta (the main and central square of the city). Thus, the original map (generated on August 12th, 2026) is shown in Figure \ref{fig: algeciras raw}, where Plaza Alta is indicated by the large blue node. It is a connected graph consisting of 5,819 vertices and 13,852 edges.
With the map at hand, which includes all types of roads, we continue to decide which types of roads to consider. For our purposes, we consider only urban streets where scooters, the focus of our study on shared-transportation, are allowed to enter. This includes only main urban streets (which usually have bike lanes or are permitted), neighborhood streets, small pedestrian paths, and dedicated bike paths. Thus, the filtered map, having 1,962 vertices and 4,257 edges, is shown in Figure \ref{fig: algeciras filtered}. This version of the map may still include self-loops and multiple edges. To avoid errors in counting vertex degrees, which could harm the greedy steps of the algorithm, these loops and multiple edges are removed from our graph without deleting their corresponding vertices and edges. The resulting graph, which has 1,962 vertices and 2,671 edges, is ready to be analyzed.

We then performed the \texttt{IG-MDSP} algorithm on this graph for 2,000 iterations, resulting in the same number of minimal dominating sets. The size distribution is summarized in Table \ref{tab: algeciras 3km}. Simple checking confirms that all sets produced are unique. It is apparent that the size distribution of the sets forms a normal-like shape, which is reasonable given the randomness of the greedy steps. From the data, we retrieved all vertices with the highest frequency of appearance across all the sets produced. Our experiment resulted in 38 such vertices out of 1,962 available (about 1.94\%). These selected vertices, which are the candidates for placing the stations, are shown in Figure \ref{fig: algeciras result} as the large red nodes, as well as, in Figure \ref{fig: algeciras final map}, over the real street view map of Algeciras.

\begin{table}[ht]
    \centering
    \caption{Size distribution of minimal dominating sets produced by the \texttt{IG-MDSP} algorithm on Algeciras' map network.}
    \label{tab: algeciras 3km}
    \small
    \begin{tabular}{lcccccccccccccc}
        \toprule
        Set size & 661 & 662 & 663 & 664 & 665 & 666 & 667 & 668 & 669 & 670 & 671 & 672 & 673 & 674 \\
        \midrule
        Frequency & 1 & 1 & 7 & 5 & 16 & 30 & 52 & 67 & 101 & 161 & 180 & 202 & 255 & 231 \\
        \bottomrule
    \end{tabular}

    \bigskip

    \begin{tabular}{lcccccccccc}
        \toprule
        Set size & 675 & 676 & 677 & 678 & 679 & 680 & 681 & 682 & 683 & 684 \\
        \midrule
        Frequency & 196 & 170 & 114 & 71 & 81 & 25 & 21 & 7 & 4 & 2 \\
        \bottomrule
    \end{tabular}
\end{table}


\subsection{The particular study case for the city of La L\'inea de la Concepci\'on}

Having the established model for Algeciras at hand, in order to complete our study for the main cities in the area of ``Campo de Gibraltar'', we continue by performing the exact same methodology on the neighboring city of La L\'inea de la Concepci\'on.

We generate the road network of the city through OSMnx (see Figure \ref{fig: lalinea raw}), having a radius of 3 kilometers centered at the city's main square: Plaza de Toros El Arenal, presented as the large blue node. The map, generated on August 12, 2026, has 5,602 vertices and 13,386 edges. For our study purposes, we choose only the road types where scooters are permissible, resulting in a new ``filtered map"  having 2,793 vertices and 5,610 edges (see Figure \ref{fig: lalinea filtered}). This graph is directed and has multiple edges and loops. To produce the minimal dominating sets, we need to convert the graph into a simple, undirected graph by removing the aforementioned instances. The final graph, ready to be analyzed, has 2,793 vertices and 3,694 edges.

We ran the \texttt{IG-MDSP} algorithm on the final graph to produce 2,000 minimal dominating sets, taking roughly two days of computing time. All sets produced are unique, and the size distribution is presented in Table \ref{tab: lalinea 3km}. We extracted all vertices with the highest appearance frequency over all produced sets, yielding 89 such vertices out of 2,793 available (about 3.19\%). Each of these vertices has a frequency of 2,000, indicating that they appear in every minimal dominating set produced by the algorithm. The final map is presented in Figure \ref{fig: lalinea final} where these selected vertices (shown as large red nodes) indicate the positional candidates for scooter stations, as well as, in Figure \ref{fig: lalinea final map}, over the real street view map of La L\'inea de la Concepci\'on.

\begin{table}[ht]
    \centering
    \caption{Size distribution of minimal dominating sets produced by the \texttt{IG-MDSP} algorithm on La L\'inea de la Concepci\'on's map network.}
    \label{tab: lalinea 3km}
    \small
    \begin{tabular}{lcccccccccccccc}
        \toprule
        Set size & 944 & 945 & 946 & 947 & 948 & 949 & 950 & 951 & 952 & 953 & 954 & 955 & 956 & 957 \\
        \midrule
        Frequency & 1 & 1 & 0 & 0 & 1 & 8 & 6 & 14 & 21 & 47 & 76 & 129 & 150 & 178 \\
        \bottomrule
    \end{tabular}

    \bigskip

    \begin{tabular}{lccccccccccccccc}
        \toprule
        Set size & 958 & 959 & 960 & 961 & 962 & 963 & 964 & 965 & 966 & 967 & 968 & 969 & 970 & 971 & 972\\
        \midrule
        Frequency & 224 & 213 & 217 & 158 & 175 & 137 & 86 & 68 & 47 & 17 & 15 & 6 & 4 & 0 & 1 \\
        \bottomrule
    \end{tabular}
\end{table}

\subsection{Some conclusions on the empirical evaluations}

First of all, while the exact computation of dominating forced vertices is co-NP-hard, as shown in Theorem \ref{th:complex}, the IG-MDSP algorithm demonstrates that metaheuristic frequency analysis is a viable method for identifying critical infrastructure locations in (relatively) large-scale real-world networks.

On the other hand, the results of the experimental evaluations for the two cities in the Campo de Gibraltar area show the following. In Algeciras, the framework identified 38 high-frequency candidate nodes out of 1,957 available vertices (about $1.94\%$). This indicates a highly localized set of crucial connection points within the street network for placing shared mobility stations, while in La L\'inea de la Concepci\'on, the analysis yielded 89 candidate nodes out of 2,793 available vertices (about $3.19\%$). This higher count and perfect recurrence highlight a stronger structural necessity for distributed stations across its street grid layout. However, in both cities, the counts for the non-negotiable node candidates (between $1.9\%$ and $3.2\%$) are far below the theoretical maximum bound of $33.3\%$ (regarding $n(G)/3$ from Theorem \ref{upper}). This would ensure that city planners might achieve near-optimal initial coverage ($100\%$ accessibility within neighborhood hops) with a highly minimal capital expenditure. These high-frequency nodes may serve as a ``kind-of optimal Phase 1'' infrastructure blueprint for micro-mobility deployment in these two cities.

In addition, we may also conclude that in Algeciras, the planning may remain relatively more flexible because fewer nodes are strictly forced. Hence, operators might have leeway to adjust station positions based on secondary real-world constraints (such as land availability, power access, or commercial interest) without significantly degrading overall system equity. On the other hand, in La L\'inea de la Concepci\'on, the 89 identified nodes represent rigid, non-negotiable anchor points. The omission of any of these stations would significantly increase coverage costs or require disproportionately larger station fleets to maintain equitable access across adjacent neighborhoods.

\section*{Concluding remarks}

We have considered in this work a graph theoretical approach for detecting key points that might be of interest for placing non-negotiable location stations in transportation sharing systems that would be addressed to keep some equity or accessibility features. That is, we have focused into modeling a situation in which locations for such sharing systems will be placed as near as possible to every possible client in the urban area that would be considered, and have aimed to detect the most crucial points for such possible locations. In this sense, a natural model for such a situation is that of dominating sets which naturally arises in the area of graph theory. Having such setting in mind, we then focused on detecting those more important points that are somehow forced to have a location for placing the transportation devices. Such points are then understood as vertices that belong to every dominating set of the smallest possible cardinality.

In the study, we have first shown that finding such points is indeed a computationally hard problem, and have presented some heuristic to obtain possible approximated solutions. The procedures have been applied to the two cities of Campo de Gibraltar, and such important points have been detected. Moreover, some theoretical contributions on the vertices that belong to every dominating set of a graphs have been derived. As a natural continuation of our investigation, we consider the following research lines are worthy of exploring.

\begin{itemize}
    \item Improve the heuristic for finding optimal dominating sets of the graph.
    \item Develop graph theoretical studies (bounds and properties) on the dominating forced vertices of different graph classes which would deliver better knowledge about them. In particular, graphs that are planar or close to planar seem to be of interest, because the road network is often close to planar graphs.
    \item Design some polynomial algorithm to find such dominating forced vertices in particular network structures.
    \item Consider some other more realistic requirements of location stations in transportation sharing systems so that the model used in our investigations will better describe the existence of some related crucial locations.
    \item Studying the number of dominating forced vertices of random graphs seems to be an interesting task in order to test more the capabilities of the IG-MDSP algorithm.
\end{itemize}

\section*{Data and implementations availability}

All codes of the proposed algorithm and the datasets generated during the study are publicly available in the GitHub repository at \texttt{https://github.com/farhan-math/domination}. All code was implemented using Python 3.12.

\section*{Acknowledgements}

This work has been supported by the European Commission's Horizon Europe Research and Innovation programme through the Marie Sklodowska-Curie Actions Staff Exchanges (MSCA-SE) under Grant Agreement no.101182819 (COVER: (C)ombinatorial (O)ptimization for (V)ersatile Applications to (E)merging u(R)ban Problems). I. Peterin was partially supported by the Slovenian Research and Innovation Agency (research core funding No.\ P1-0297). I.\ G.\ Yero has been partially supported by the Spanish Ministry of Science and Innovation through the grant PID2022-139543OB-C41.

\section*{Conflict of interest disclosure statement}

Dorota Kuziak is an Editor of the Open Mathematics journal and was not involved in the review and decision-making process of this article.

\newpage
\section*{Appendix}

\begin{figure}[h]
    \centering
    \begin{subfigure}{0.4\textwidth}
    \includegraphics[width=1.2\textwidth]{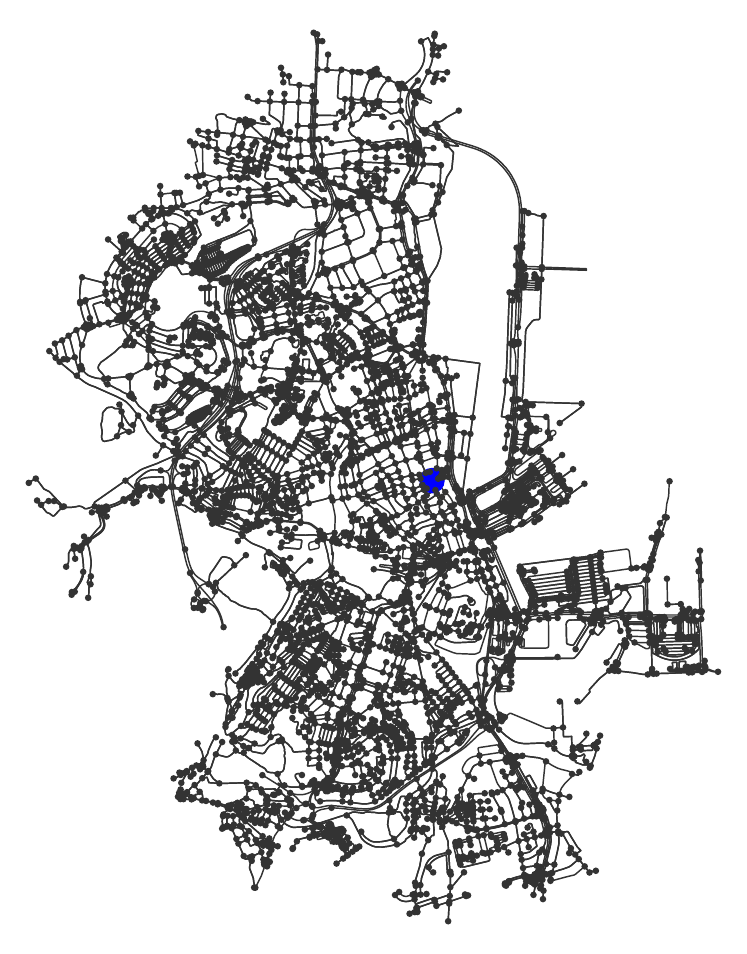}
    \caption{Original map}
    \label{fig: algeciras raw}
    \end{subfigure}
    \hfill
    \begin{subfigure}{0.4\textwidth}
    \includegraphics[width=\textwidth]{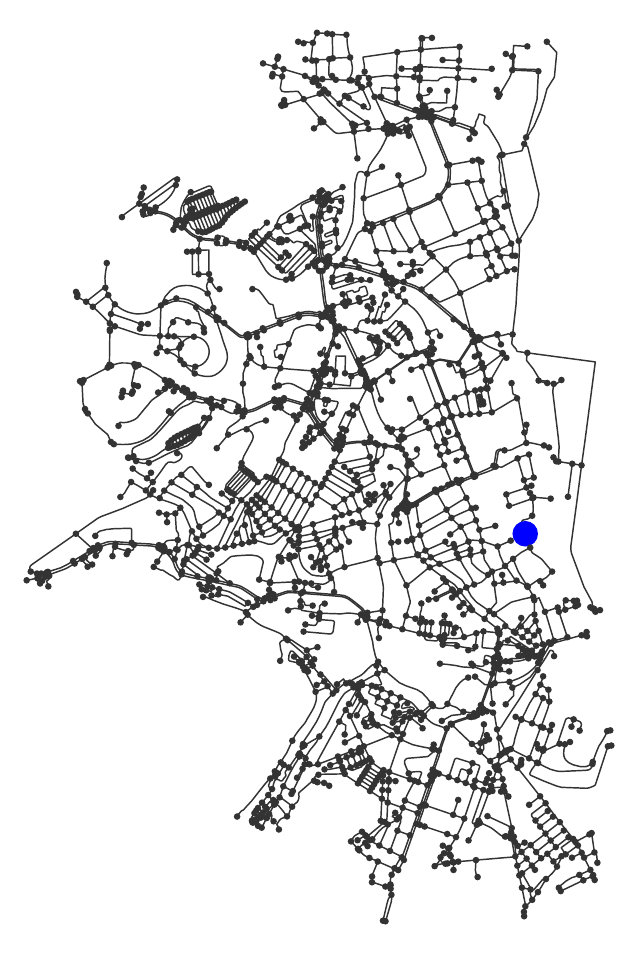}
    \caption{Filtered map}
    \label{fig: algeciras filtered}
    \end{subfigure}
    \caption{The road network map of Algeciras with radius 3 km centered at Plaza Alta.}
\end{figure}

\begin{figure}[h]
    \centering
    \includegraphics[width=0.9\linewidth]{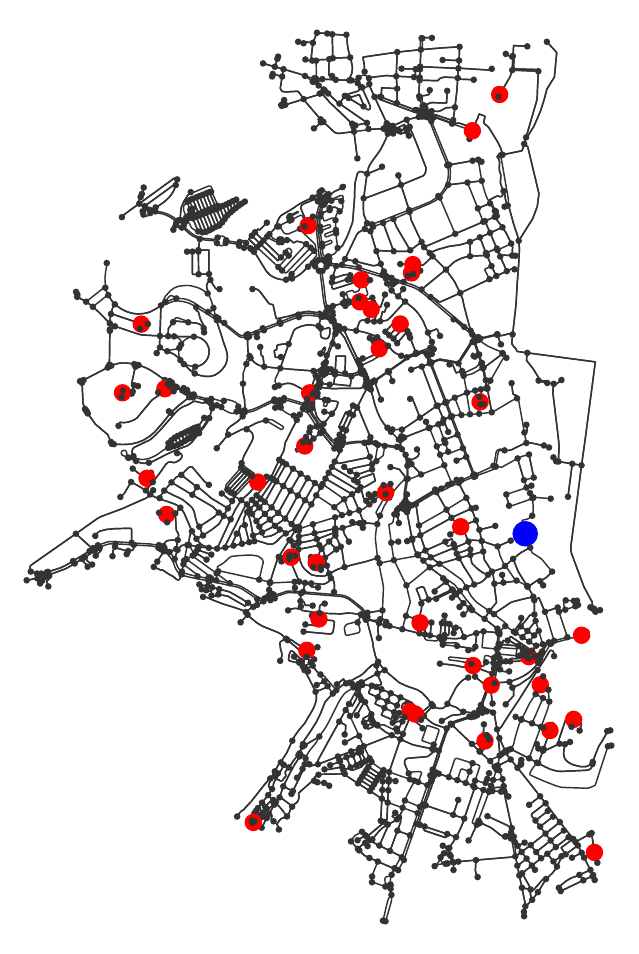}
    \caption{The filtered road network of Algeciras and nodes with highest frequency of appearance}
    \label{fig: algeciras result}
\end{figure}

\begin{figure}[h]
    \centering
    \includegraphics[width=0.85\linewidth]{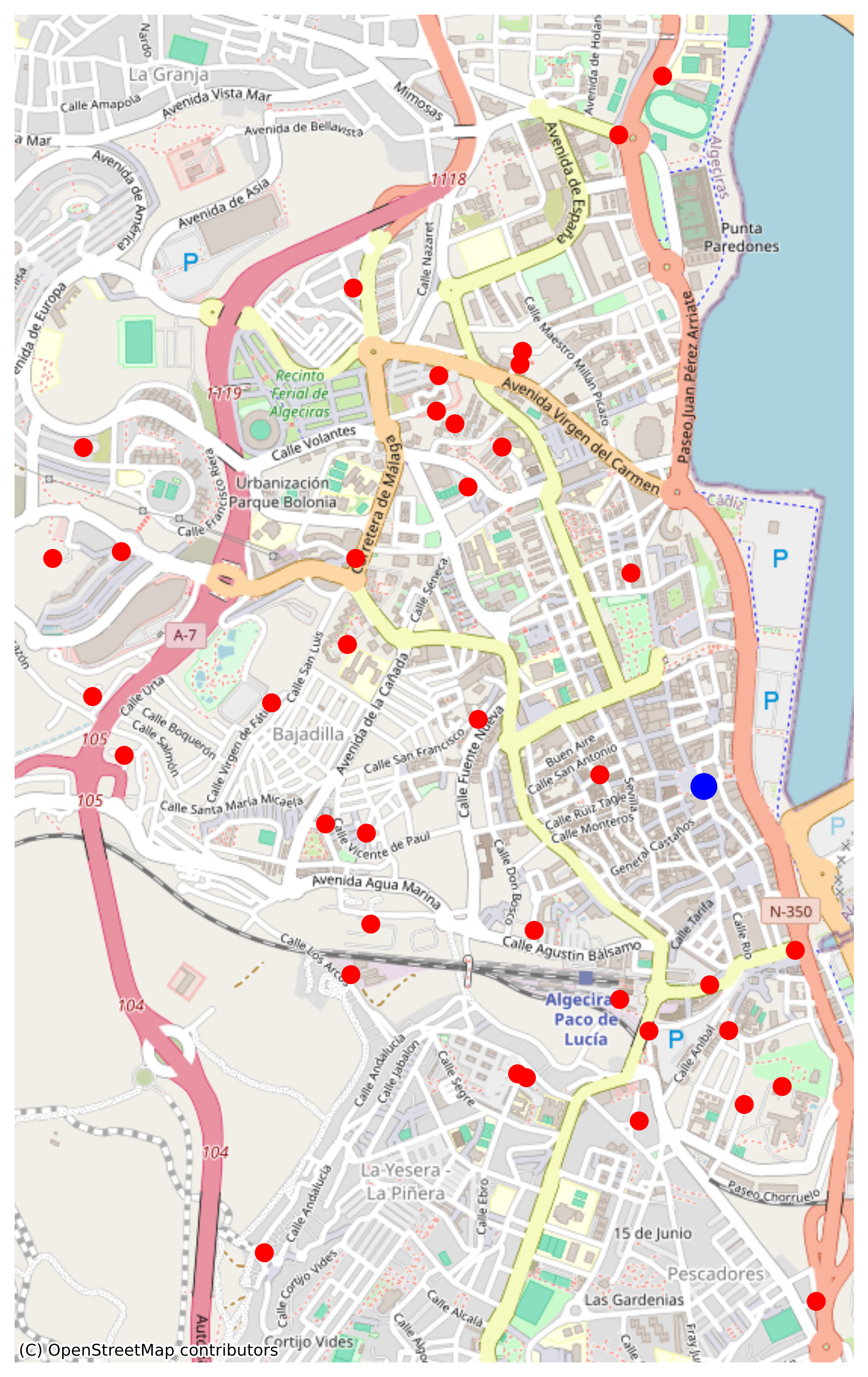}
    \caption{The map of Algeciras and nodes with highest frequency of appearance. Map data \copyright~OpenStreetMap contributors (openstreetmap.org/copyright).}
    \label{fig: algeciras final map}
\end{figure}

\begin{figure}[h]
    \centering
    \begin{subfigure}{0.5\textwidth}
    \includegraphics[width=1\textwidth]{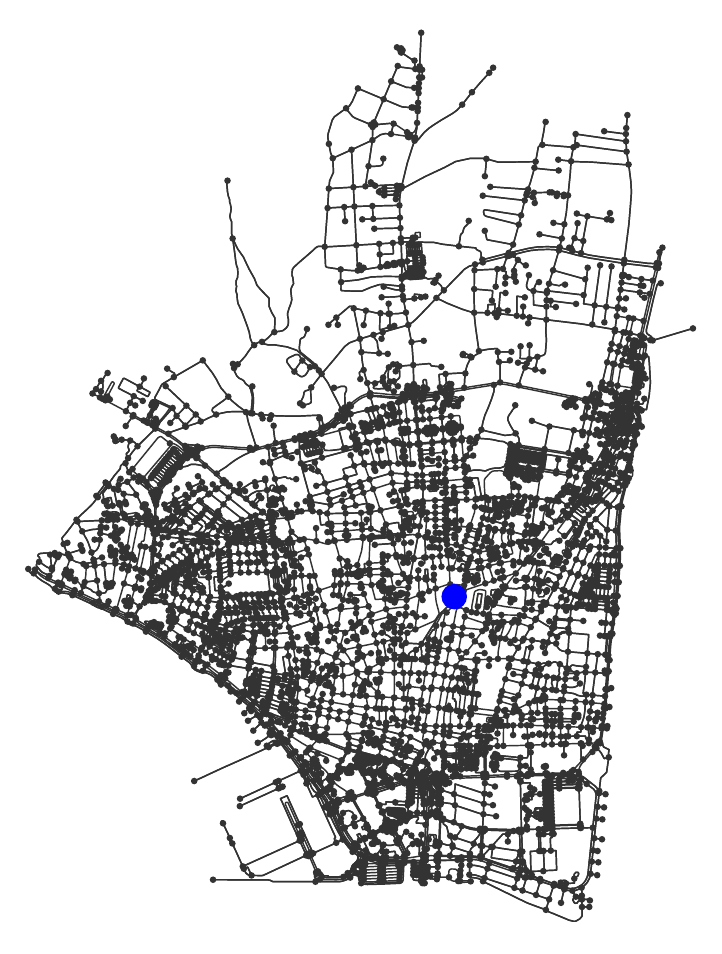}
    \caption{Original map}
    \label{fig: lalinea raw}
    \end{subfigure}
    \\
    \begin{subfigure}{0.5\textwidth}
    \includegraphics[width=1\textwidth]{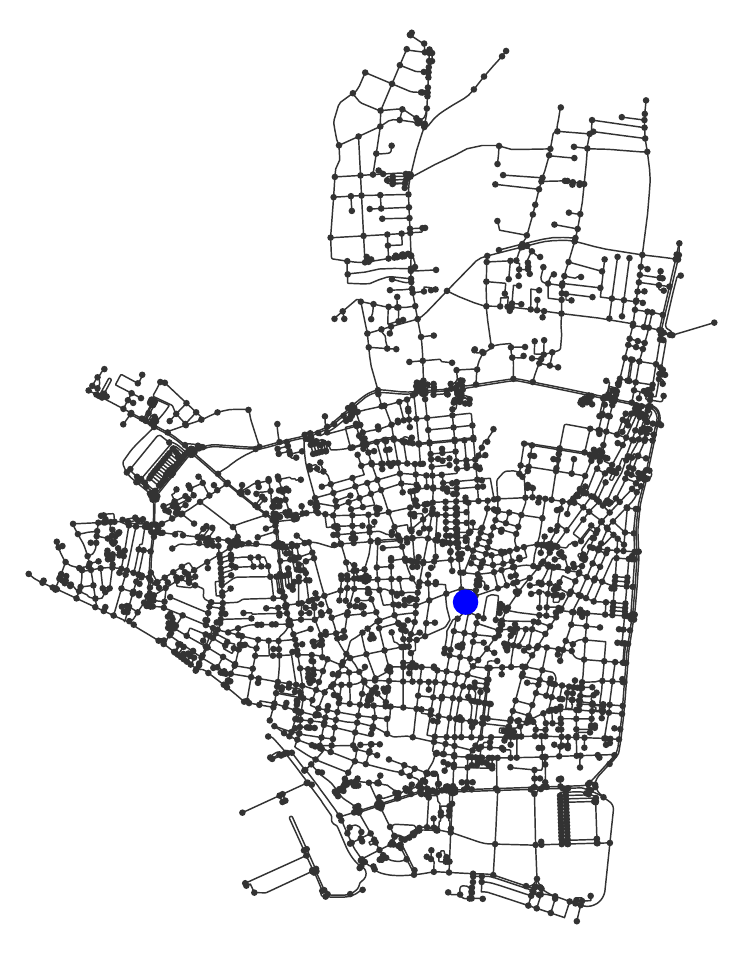}
    \caption{Filtered map}
    \label{fig: lalinea filtered}
    \end{subfigure}
    \caption{The road network map of La L\'inea de la Concepci\'on with radius 3 km centered at Plaza de Toros El Arenal.}
\end{figure}

\begin{figure}[h]
    \centering
    \includegraphics[width=0.9\linewidth]{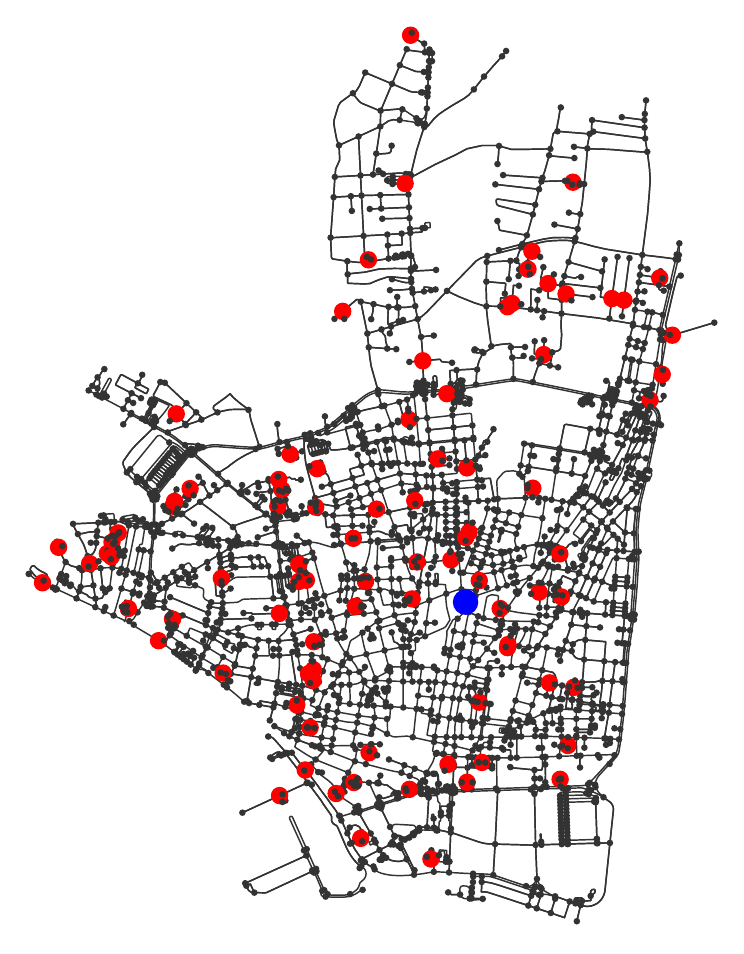}
    \caption{The road network of La L\'inea de la Concepci\'on and nodes with highest frequency of appearance}
    \label{fig: lalinea final}
\end{figure}

\begin{figure}[h]
    \centering
    \includegraphics[width=0.85\linewidth]{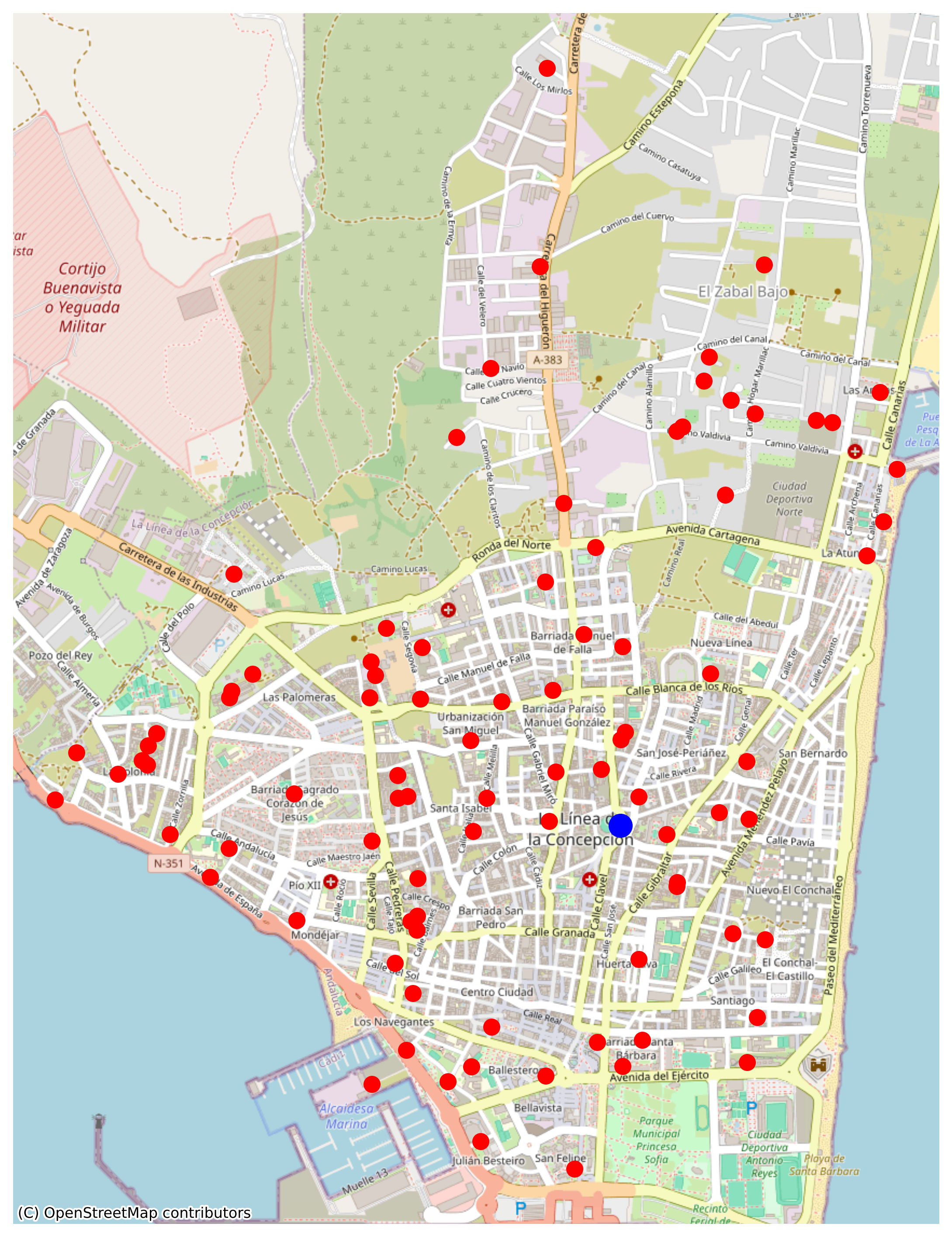}
    \caption{The map of La L\'inea de la Concepci\'on and nodes with highest frequency of appearance. Map data \copyright~OpenStreetMap contributors.}
    \label{fig: lalinea final map}
\end{figure}

\end{document}